	\tikzstyle{frame} = [draw, -latex]
	\tikzstyle{line} = [draw]
	\tikzstyle{line2} = [draw, dashdotted]
	\tikzstyle{line3} = [draw, dashed]
	\tikzstyle{line3UD} = [draw, dashed]
	\tikzstyle{place} = [circle, draw=black, fill=white, thick, inner sep=2pt, minimum size=1mm]
	\tikzstyle{place2} = [circle, draw=black, fill=black, thick, inner sep=2pt, minimum size=1mm]
	\tikzstyle{placeRed} = [circle, draw=red, fill=red, thick, inner sep=2pt, minimum size=1mm]
	\tikzstyle{vertex} = [circle, draw=black, fill=black, thick, inner sep=2pt, minimum size=1mm]
\tikzstyle{decision} = [diamond, draw, fill=blue!20,
\tikzstyle{block1} = [rectangle, draw, text width=8em, text centered, minimum height=4em]
\tikzstyle{block2} = [rectangle, draw, text width=3em, text centered, minimum height=4em]
\tikzstyle{block3} = [rectangle, draw, text width=11em, text centered, minimum height=12em, dashed,black]
\tikzstyle{block4} = [rectangle, draw, text width=11em, text centered, minimum height=18em, dashed,black]
\tikzstyle{block5} = [rectangle, draw, text width=11em, text centered, minimum height=32em, dashed,black]
\tikzstyle{block6} = [rectangle, draw, text width=11em, text centered, minimum height=18.5em, dashed,black]
\tikzstyle{block7} = [rectangle, draw, text width=11em, text centered, minimum height=11.8em, dashed,black]
\tikzstyle{line01} = [draw, -latex']
\tikzstyle{line02} = [draw, latex'-latex']
\newcommand{\bbm}{\begin{bmatrix}}
\newcommand{\ebm}{\end{bmatrix}}
\newtheorem{corollary}{\bfseries Corollary}
\newtheorem{definition}{\bfseries Definition }
\newtheorem{lemma}{\bfseries Lemma}
\newtheorem{remark}{\bfseries Remark}
\newtheorem{theorem}{\bfseries Theorem}
\newtheorem{assumption}{\bfseries Assumption}
\def\BState{\State\hskip-\ALG@thistlm}
\definecolor{MG}{rgb}{0,0.45,0.08}
\title{Topological Controllability of Undirected Networks of Diffusively-Coupled Agents}
\author{Hyo-Sung Ahn$^{1,2}$,  Kevin L. Moore$^{2}$, Seong-Ho Kwon$^{1}$, Quoc Van Tran$^{1}$, Byeong-Yeon Kim$^{3}$, and Kwang-Kyo Oh$^{4}$

\thanks{${}^{1}$School of Mechanical Engineering, Gwangju Institute of Science and Technology (GIST), Gwangju, Korea. E-mails: {\tt\small hyosung@gist.ac.kr; seongho@gist.ac.kr; tranvanquoc@gist.ac.kr}}
\thanks{${}^{2}$Department of Electrical Engineering, Colorado School of Mines, Golden, CO, USA. E-mails: {\tt\small kmoore@mines.edu; hahn@mines.edu}}
\thanks{${}^{3}$Korea Atomic Energy Research Institute, Daejeon, Korea.  E-mail: {\tt\small tktrktna12@gmail.com}}
\thanks{${}^{4}$Department of Electrical Engineering, Sunchon National University, Sunchon, Jeollanam-do, Korea.  E-mail: {\tt\small kwangkyo.oh@gmail.com}}
}
\begin{document}
\maketitle
\thispagestyle{empty}
\pagestyle{empty}

\begin{abstract} This paper presents conditions for establishing topological controllability in undirected networks of diffusively-coupled agents. Specifically, controllability is considered based on the signs of the edges (negative, positive or zero). Our approach differs from well-known structural controllability conditions for linear systems or consensus networks, where controllability conditions are based on edge connectivity (i.e., zero or non-zero edges).  Our results first provide a process for  merging controllable graphs into a larger controllable graph. Then, based on this process, we provide a graph decomposition process for evaluating the topological controllability of a given network.
\end{abstract}
\begin{IEEEkeywords}
Diffusive networks, Topological controllability, Structural controllability, Merging process, Decomposition process
\end{IEEEkeywords}

\section{Introduction}
This paper studies the controllability of a class of network systems using only knowledge of the sign-connectivity between nodes, without relying upon knowledge of the magnitude of the connections. By \textit{sign-connectivity}, we mean that the knowledge of signs of edges can be known; but the magnitude of the edges are not known. Since the magnitude of the edges weights are not used, it is not an algebraic approach, but rather we call it a topological approach. We call such a topological analysis of the controllability of a network \textit{topological controllability}.

Topological controllability has also been studied under the name of \textit{structural controllability} \cite{ ChingTaiLin_TAC_1974}, although they do not consider the signs of the edges. In traditional controllability of a network or in linear systems theory \cite{Antsaklis:2007:LSP:1543534}, both the topology and coupling strengths (i.e., magnitude of the connections) between nodes are taken into account. Thus, in traditional approaches, it is a topological and algebraic solution. However, in certain networks, it may be hard to know the coupling strengths between nodes or there may be uncertainties in the measure or identification of the coupling strengths. Consequently, if the controllability of a network can be evaluated only using the topology or structure, it will be beneficial in some applications, including control of digital and electric circuits \cite{Goldstein_TCS_1979,Feng_Lu_ICMLC_2005}, power electronics \cite{Summers_etal_TCNS_2015}, large scale networks \cite{Wang_automatica_2016}, complex networks \cite{Wang_etal_SR_2017,Shen_etal_CCDC_2018},  and brain networks \cite{Gu_etal_nature_comm_2015}.

From a review of the literature, we could find many interesting analyses and concepts related to structural controllability or topological controllability. The pioneering analysis was conducted in \cite{ ChingTaiLin_TAC_1974}, which introduced the concepts of stem, dilations, bud, origin, cactus, and accessibility. These concepts were used for merging basic controllable graphs into a bigger graph. Since  \cite{ ChingTaiLin_TAC_1974}, there has been a lot of research on the controllability of network systems. In \cite{YangYu_nature_2011}, the concept of maximum matching was introduced to find matched nodes from inputs. The matched nodes are elements of paths. Then, it was argued that unmatched nodes need to be controlled directly by control inputs.  In \cite{Ruths_science_2014}, a minimum control structure, i.e., the minimum number of independent control inputs, was further examined on the basis of number of source nodes, and external/internal dilation nodes. Under Laplacian dynamics, input symmetry was characterized for making a network uncontrollable in \cite{Mesbahi_Egerstedt_book}. We also note that there have been many studies on structural controllability for specific types of dynamic systems, including switching networks \cite{Hou_TCS1_2016}, high-order dynamic systems \cite{Partovi_etal_ICRAM_2010}, random networks \cite{Faradonbeh_TCNS_2017}, and descriptor systems \cite{Yang_etal_ICCA_2007}. 

However, in most of the literature, only the connectivity between nodes are considered. That is, in most of literature, a value of an edge is given as zero or non-zero. But, in many network systems the signs of edge (i.e., positive or negative) are quite critical for evaluating the convergence or stabilization of the overall network. For example, in diffusive coupling networks, the positive edge means a cooperative coupling, while a negative edge means a negative coupling. In some systems (e.g., social networks) there can exist edges with different signs, as some agents are cooperative and others are antagonistic. For such systems  it is critical to know the signs of the edge values. Motivated from this observation, in this paper, we assume that edges of a network are classified as negative, positive, or zero and with only this knowledge we provide conditions for the topological controllability of the network system (for more motivations and advantage of using signs of edges, see \textit{Remark~\ref{remark_contribution}}). To deliver our ideas in a simple way, we consider the specific case of diffusively-coupled agents connected as an undirected network, although the techniques developed in this paper can be extended to directed and general networks.

In the sequel there are two main results. First, we interpret the results of \cite{Tsatsomeros_siam_1998} in the sense of a graph. We then present some conditions for merging subgraphs under the condition of topological controllability. Then, by the merging rules, we can gradually enlarge the network while keeping the topological controllability. However, this merging process does not provide a direct method for evaluating a topological controllability of a given network. Thus, as the second goal of this paper, we present some ideas to decompose a graph into subgraphs, which are path graphs. Then, starting again from the decomposed subgraphs, we gradually again add the edges to merge the subgraphs under the topological controllability condition. By this way, we can find a largest subgraph, which can be called a subgraph induced by the controllability. This allows us to develop an algorithm for testing the topological controllability of a given network. 

The paper consists of as follows. In Section~\ref{sec_pre}, some preliminaries are given and the topological controllability problems are formulated. In Section~\ref{sec_topo}, certain conditions for topological controllability are presented, and in Section~\ref{sec_graph}, two algorithms are provided to examine the topological controllability of a given network.  Examples and conclusions are presented in Section~\ref{sec_exam} and Section~\ref{sec_conc} respectively.

\section{Preliminaries and Problem formulations} \label{sec_pre}
Let an undirected network of diffusively-coupled agents $x_i$ with direct nodes inputs $u_i$ be given by:
\begin{align}
\dot{x}_i = -\sum_{j \in \mathcal{N}_i} a_{ij} (x_i - x_j) + b_i u_i
\end{align}
where $ \mathcal{N}_i$ is the set of neighboring nodes of $i$, and $a_{ij}$ are diffusive couplings and $b_i$ are input couplings. We define the network $T= [L, B]$ concisely as the Laplacian dynamics:
\begin{align}
\dot{x} = L x + B u \label{eq_laplacian}
\end{align}
where $x=(x_1,\ldots, x_n)^T$, $u=(u_1, \ldots, u_m)^T$, $L \in \Bbb{R}^{n \times n}$ is a Laplacian matrix with possible negative edges, and $B  \in \Bbb{R}^{n \times m} $ is  the input coupling matrix. The Laplacian matrix $L$ is a matrix defined by the interactions of  $n$ state nodes and the matrix $B$ defines input couplings from  $m$ input nodes to state nodes.  So, there are $n+m$ nodes in the network. The interactions among state nodes are undirected (thus L is a symmetric row- and column-stochastic matrix) while the interactions from the input nodes to state nodes are directed. It is also assumed that each input node is connected to only one state node by one-to-one mapping (injective).

\begin{definition}
Controllability: An undirected network $T= [L, B]$ of diffusively-coupled agents with directed input nodes given by \ref{eq_laplacian} is said to be controllable if there exists an input vector $u(t)$ such that $x(t)\to x^*$ for any desired vector $x^*$.\end{definition}

\begin{definition}
Topological controllability: A controllable undirected network $T= [L, B]$ of diffusively-coupled agents with directed input nodes given by \ref{eq_laplacian} is said to be topologically controllable if all other undirected networks $\bar T= [ \bar L, \bar B]$ whose edges have the same signs (positive, negative, or zero) as $T= [L, B]$ are also controllable.
\end{definition}

To characterize the topological controllability of an undirected network of diffusively-coupled agents, we borrow the analysis given in \cite{Tsatsomeros_siam_1998}. Thus, this paper is a kind of
an interpretation of the analaysis of \cite{Tsatsomeros_siam_1998}. Let the network can be re-defined  as a \textit{graph}, denoted
\begin{align}
 \mathcal{G}(T) = (\mathcal{V}, \mathcal{E})
\end{align}
where $T= [L, B]$, the set of vertices $\mathcal{V}$ is the set of indices of nodes as $\mathcal{V}=\{ \underbrace{ 1, \ldots, n}_{\text{state nodes} = \mathcal{V}^S }, \underbrace{n+1, \ldots, n+m}_{ \text{input nodes} = \mathcal{V}^I}\}$, and the set of edges $\mathcal{E}$ is determined from the interaction characteristics between nodes. Fig.~\ref{network_graph_concept} depicts a network and a graph.
It is necessary to distinguish the concepts of \textit{network} and \textit{graph}. The network is a relationship of physical interactions among nodes, while the graph is a representation of the network as a set of vertices and edges.  To illustrate, consider a network depicted in Fig.~\ref{network_graph_concept}(a). 
With some edge weightings, for example, let  the Laplacian matrix corresponding to the network in Fig.~\ref{network_graph_concept}(a) be given as:
\begin{align}
L = \left[\begin{array}{ccccc}
               -2 & 2 & 1 & 0 & -1  \\
              2 & -3 & 1 & 1 & -1  \\
              1 & 1 & -3 & 1 & 0  \\
              0 & 1 & 1 & -5 & 3  \\
              -1 & -1 & 0 & 3 & -1  \end{array} \right]
\end{align}
and the input coupling matrix $B$ be given as:
\begin{align}
B= \left[\begin{array}{ccc}
0 & 0 & 0 \\
 0 & 0 & 0 \\
 1 & 0 & 0 \\
 0 & 1 & 0 \\
 0 & 0 & 1 \end{array} \right]
\end{align}
Then, the interaction  characteristics of a graph, which is a representation of a network, are decided by the matrices ${L}$ and $B$. That is, given $T=[t_{ij}]=[L, B]$, if $t_{ij} \neq 0$, then there exists an edge $(i,j)$, which is the directed edge from $i$ to $j$.
For undirected edges (i.e., when $i, j \in \{1, \ldots, n\}$), if there exists $(i,j)$ in $\mathcal{E}$, then there also exists $(j, i)$. If $t_{ii} \neq 0,~i \in \{1, \ldots, n\}$, then there exists a self-loop at node $i$. We assume there is no edge between the input nodes. In the graph, there are edges from $\mathcal{V}^S$ to $\mathcal{V}^I$ as $(i,j) \in \mathcal{E}$ where $i \in \mathcal{V}^S$ and $j \in \mathcal{V}^I$. The edge $(i,j)$  from a state node to an input node in the graph implies that the node $i$ is influenced by $j$. For a node $i$, if there exists an edge $(i,j)$, then $j$ is a neighboring node (the set of neighboring nodes of node $i$ is denoted as $\mathcal{N}_i$) in the graph $\mathcal{G}$, i.e., $j \in \mathcal{N}_i$. Fig.~\ref{network_graph_concept}(b) depicts a graph, which is a representation of the network in Fig.~\ref{network_graph_concept}(a). The edge directions in the graph and the network are reversed. It is shown that $\mathcal{V}^S =\{ 1, 2, 3, 4, 5  \}$ and $\mathcal{V}^I =\{6, 7, 8  \}$, and the edges from $\mathcal{V}^S$ to $\mathcal{V}^I$ are $(3,6), (4,7), (5,8)$. For a set $\alpha$, which is a subset of $\mathcal{V}$ (i.e., $\alpha \subseteq \mathcal{V}$), the set of neighboring nodes of the set $\alpha$ is  defined as $\mathcal{N}(\alpha)= \cup \mathcal{N}_i,~\forall i \in \alpha$.

The graph $\mathcal{G}$ can be decomposed as $\mathcal{G} = \mathcal{G}^S \cup  \mathcal{G}^I$, where $\mathcal{G}^S$ is the induced subgraph by $\mathcal{V}^S$, and $\mathcal{G}^I$ is the interaction graph between the set of vertices $\mathcal{V}^S$ and set of vertices $\mathcal{V}^I$. Thus, $\mathcal{G}^S = (\mathcal{V}^S, \mathcal{E}^S)$ and $\mathcal{G}^I = (\mathcal{V}, \overrightarrow{\mathcal{E}}^I)$, where $\overrightarrow{\mathcal{E}}^I$ is the set of directed edges. Note that $\mathcal{E}=\mathcal{E}^S \cup \overrightarrow{\mathcal{E}}^I$.
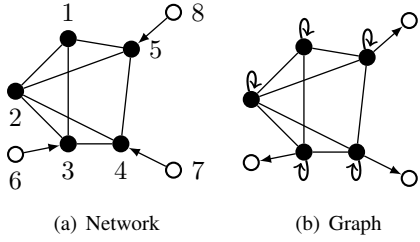
\begin{figure}
\centering
\subfigure[Network]{
\begin{tikzpicture}[scale=0.7]
\node[place, black] (node1) at (-1,1) [label=above:$1$] {};
\node[place, black] (node2) at (-2,0) [label=below:$2$] {};
\node[place, black] (node3) at (-1,-1) [label=below:$3$] {};
\node[place, black] (node4) at (0,-1) [label=below:$4$] {};
\node[place, black] (node5) at (0.2,0.8) [label=right:$5$] {};

\node[place, circle] (node6) at (-2.0,-1.2) [label=below:$6$] {};
\node[place, circle] (node7) at (1,-1.5) [label=right:$7$] {};
\node[place, circle] (node8) at (1.0,1.5) [label=right:$8$] {};

\draw (node1) [line width=0.5pt] -- node [left] {} (node2);
\draw (node1) [line width=0.5pt] -- node [right] {} (node3);
\draw (node1) [line width=0.5pt] -- node [right] {} (node5);
\draw (node2) [line width=0.5pt] -- node [right] {} (node3);
\draw (node2) [line width=0.5pt] -- node [left] {} (node4);
\draw (node2) [line width=0.5pt] -- node [left] {} (node5);
\draw (node3) [line width=0.5pt] -- node [left] {} (node4);
\draw (node4) [line width=0.5pt] -- node [left] {} (node5);

\draw (node6) [-latex, line width=0.5pt] -- node [right] {} (node3);
\draw (node7) [-latex, line width=0.5pt] -- node [right] {} (node4);
\draw (node8) [-latex, line width=0.5pt] -- node [right] {} (node5);
\end{tikzpicture}
}
\subfigure[Graph]{
\begin{tikzpicture}[scale=0.7]
\node[place, black] (node1) at (-1,1) [label=above:$$] {};
\node[place, black] (node2) at (-2,0) [label=below:$$] {};
\node[place, black] (node3) at (-1,-1) [label=below:$$] {};
\node[place, black] (node4) at (0,-1) [label=below:$$] {};
\node[place, black] (node5) at (0.2,0.8) [label=right:$$] {};

\node[place, circle] (node6) at (-2.0,-1.2) [label=below:$$] {};
\node[place, circle] (node7) at (1,-1.5) [label=right:$$] {};
\node[place, circle] (node8) at (1.0,1.5) [label=right:$$] {};

\draw (node1) [line width=0.5pt] -- node [left] {} (node2);
\draw (node1) [line width=0.5pt] -- node [right] {} (node3);
\draw (node1) [line width=0.5pt] -- node [right] {} (node5);
\draw (node2) [line width=0.5pt] -- node [right] {} (node3);
\draw (node2) [line width=0.5pt] -- node [left] {} (node4);
\draw (node2) [line width=0.5pt] -- node [left] {} (node5);
\draw (node3) [line width=0.5pt] -- node [left] {} (node4);
\draw (node4) [line width=0.5pt] -- node [left] {} (node5);

\draw (node3) [-latex, line width=0.5pt] -- node [right] {} (node6);
\draw (node4) [-latex, line width=0.5pt] -- node [right] {} (node7);
\draw (node5) [-latex, line width=0.5pt] -- node [right] {} (node8);

\draw (node1) edge [loop above,thick] node {$$} (node1);
\draw (node2) edge [loop above,thick] node {$$} (node2);
\draw (node3) edge [loop below,thick] node {$$} (node3);
\draw (node4) edge [loop below,thick] node {$$} (node4);
\draw (node5) edge [loop above,thick] node {$$} (node5);

\end{tikzpicture}
}
\caption{(a) A network with five state nodes and three input nodes. (b) Graph representation of the network.}
\label{network_graph_concept}
\end{figure}
For Fig.~\ref{network_graph_concept}, the matrix $T$ is a $5 \times (5+3)$ matrix, i.e., $T \in \Bbb{R}^{5 \times (5+3)}$.

Next, we say that any matrix with the same sign as $T$ is contained in the set of sign pattern matrices $Q(T)$. So, any matrix $T' \in Q(T)$ has the same sign as $T$ in an elementwise fashion. We also say that if the row vectors of $T'$, $\forall T' \in Q(T)$ are linearly independent, then the matrix $T$ is called an $L$-matrix. From the perspective of control system design, since the matrix $B$ can be designed, we assume that the input coupling matrix $B$ is fixed, while the Laplacian matrix is a sign pattern matrix. Thus,  $Q(T)$ is defined as
\begin{align}
Q(T) := [Q({L}), B]
\end{align}
The matrix $T=[L, B]$ is called nominal graph matrix and $Q(T)$ is called a family of sign pattern matrices.
It is certain that $\text{rank}(T) =n$ if and only if the row vectors are linearly independent. The following assumptions are necessary for simplicity.
\begin{assumption} \label{assum_diagonal}
The values of off-diagonal elements of $L$ may change; but their signs do not change (i.e., sign fixed). The diagonal elements, $l_{ii} = -\sum_{j \in \mathcal{N}_i} a_{ij}$ where $a_{ij}$ are edge weights of the network, of $L$ are non-zero and also sign fixed.
\end{assumption}
This assumption means that the sign of the summation of incident edge weights does not vary, even though each edge weight does vary under the same sign.
\begin{assumption} \label{assum_accesible}
Given a nominal graph matrix  $T=[L,B]$, the Laplacian dynamics \eqref{eq_laplacian} is controllable.
\end{assumption}
\begin{assumption} \label{assum_L_matrix}
For any $T' \in Q(T)$, the row vectors of $T'$ are linearly independent.
\end{assumption}
It is clear that these assumptions are necessary conditions for ensuring controllability for all $T' \in Q(T)$. In  \cite{Tsatsomeros_siam_1998}, \textit{Assumption~\ref{assum_accesible}} is required to ensure \textit{accessibility}\footnote{Accessibility means that  for any $i \in \mathcal{V}^S$, there is a path from $i$ to $j \in \mathcal{V}^I$ in the graph $\mathcal{G}$.} of the graph $\mathcal{G}$. If there is no path connecting an input node to a state node, the state is not controllable. The \textit{Assumption~\ref{assum_L_matrix}} means that the matrix $T=[L,B]$ is an $L$-matrix. 
\textit{Assumption~\ref{assum_accesible}} and  \textit{Assumption~\ref{assum_L_matrix}} are basic requirements for ensuring the topological controllability of a graph.

\begin{remark}
To guarantee the $L$-matrixness of $T$, one idea is to design the matrix $B$. For example,
from the relationship:
\begin{align}
\text{rank}{[T]} &=\text{rank}[{L}, B] \nonumber\\
                      &= \text{rank}({L}) +  \text{rank}(B)~~\text{if}~~ \mathbf{R}({L}) \cap  \mathbf{R}(B) = \emptyset
\end{align}
where  $ \mathbf{R}(\cdot)$ is the range of the matrix $\cdot$,
if $\text{rank}({L})= n- d$,  it is required to design $B$ such that $\text{rank}(B)= d$ with the property $ \mathbf{R}({L}) \cap  \mathbf{R}(B) = \emptyset$.
\end{remark}




With the above assumptions, the following theorem for the topological controllability of a graph is given in \cite{Tsatsomeros_siam_1998} as a sufficient condition.
\begin{theorem} \cite{Tsatsomeros_siam_1998}
Let us suppose that
 \textit{Assumption~\ref{assum_diagonal}},  \textit{Assumption~\ref{assum_accesible}}, and  \textit{Assumption~\ref{assum_L_matrix}} are satisfied. Then, for all $\alpha \subseteq \mathcal{V}^S$ satisfying $\alpha \subset \mathcal{N}(\alpha)$ in $\mathcal{G}$, if there exists at least one  $j \in  \mathcal{N}(\alpha)\setminus \alpha$ and there exists exactly one $i \in \alpha$ such that $(i,j) \in \mathcal{E}$ exists, then the graph $\mathcal{G}(T)$ determined from $T=[L,B]$ is topologically controllable. \label{theorem_Tsatsomeros}
\end{theorem}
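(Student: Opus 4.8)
The plan is to route the claim through the Popov--Belevitch--Hautus (PBH) test and then discharge the resulting rank condition combinatorially using the hypotheses on $\mathcal{G}$. Recall that $(L',B)$ is controllable if and only if $[\,L'-\lambda I \mid B\,]$ has full row rank $n$ for every $\lambda$, equivalently, there is no nonzero left null vector $c$ with $c^{\tp}(L'-\lambda I)=0$ and $c^{\tp}B=0$. Since topological controllability concerns the whole family $Q(T)$ (of which \textit{Assumption~\ref{assum_L_matrix}} already secures full row rank at $\lambda=0$), the task reduces to showing that no such $c$ exists for any $L'\in Q(L)$ and any $\lambda$; note that the shift $-\lambda I$ touches only the diagonal, so the off-diagonal sign pattern is left intact while the diagonal entries become effectively free. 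Because the argument below never uses symmetry or reality of $\lambda$, it simultaneously covers the symmetric undirected matrices that the definition of topological controllability actually requires.

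First I would fix such a hypothetical $c$ and set $\alpha=\{\,i : c_i\neq 0\,\}\subseteq\mathcal{V}^S$, which is nonempty. By \textit{Assumption~\ref{assum_diagonal}} every diagonal entry of $L$ is nonzero, so each vertex of $\alpha$ bears a self-loop and hence $\alpha\subseteq\mathcal{N}(\alpha)$ automatically; only two cases remain, namely $\alpha=\mathcal{N}(\alpha)$ and $\alpha\subsetneq\mathcal{N}(\alpha)$.

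In the closed case $\alpha=\mathcal{N}(\alpha)$, no out-neighbour of $\alpha$ escapes $\alpha$, so every directed path issuing from a vertex of $\alpha$ stays inside $\alpha\subseteq\mathcal{V}^S$ and can never reach $\mathcal{V}^I$; this contradicts the accessibility furnished by \textit{Assumption~\ref{assum_accesible}}, so the case cannot arise. In the remaining case $\alpha\subsetneq\mathcal{N}(\alpha)$ the theorem's hypothesis hands me a vertex $j\in\mathcal{N}(\alpha)\setminus\alpha$ adjacent to exactly one $i_0\in\alpha$. I then read off the $j$-th coordinate of $c^{\tp}[\,L'-\lambda I \mid B\,]=0$: if $j$ is a state node the diagonal term $-\lambda c_j$ vanishes because $j\notin\alpha$ forces $c_j=0$, and if $j$ is an input node the one-to-one input coupling together with $j\in\mathcal{N}(\alpha)$ forces its unique incident state to lie in $\alpha$; either way only the term $c_{i_0}t'_{i_0 j}$ survives. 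As $c_{i_0}\neq 0$ and $t'_{i_0 j}\neq 0$ this coordinate is nonzero, contradicting $c^{\tp}[\,L'-\lambda I\mid B\,]=0$ and completing the argument.

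The crux I expect is not any single computation but the bookkeeping around the diagonal: the clause $j\notin\alpha$ is exactly what prevents the $\lambda$-perturbed, and therefore unreliable, diagonal entries from entering the surviving coordinate, and it is the reason the algebraic $L$-matrix statement of \cite{Tsatsomeros_siam_1998} becomes a neighbourhood (matching-type) condition rather than a bare full-rank requirement on $[L,B]$. Care is also needed to separate cleanly the accessibility case from the matching case, and to confirm that the single surviving term genuinely cannot be cancelled by any admissible choice of magnitudes, which is where the sign-fixing of \textit{Assumption~\ref{assum_diagonal}} is used.
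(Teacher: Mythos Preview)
The paper does not supply its own proof of this statement; Theorem~\ref{theorem_Tsatsomeros} is quoted from \cite{Tsatsomeros_siam_1998} and thereafter used as a black box, so there is no in-paper argument to compare your proposal against.

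Your PBH-based route is correct and is essentially the standard argument for sign-controllability results of this kind: take a putative nonzero left null vector $c$ of $[L'-\lambda I\mid B]$, let $\alpha$ be its support, dispose of the closed case $\alpha=\mathcal{N}(\alpha)$ via accessibility, and in the open case read off column $j$ to isolate the single surviving off-diagonal term $c_{i_0}t'_{i_0 j}\neq 0$. Your key observation---that $j\notin\alpha$ kills the only diagonal contribution $c_j(L'_{jj}-\lambda)$, so the unreliable $\lambda$-shifted diagonal never enters the surviving column---is exactly what upgrades the static $L$-matrix condition on $[L,B]$ to the full PBH rank condition over all $\lambda$. Two minor remarks: (i) the paper's Lemma immediately following the theorem shows that under \textit{Assumption~\ref{assum_diagonal}} and connectedness the closed case $\alpha=\mathcal{N}(\alpha)$ in fact never arises, so your accessibility argument there is sound but ultimately unnecessary; (ii) your proof never actually invokes \textit{Assumption~\ref{assum_L_matrix}}, since the dedicated-node hypothesis already forces full row rank at $\lambda=0$ as a special case---that assumption is listed in the theorem as a transparent necessary condition rather than as an independent ingredient of the sufficiency argument.
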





\section{Topologically Controllable Graphs} \label{sec_topo}
This section is dedicated to an elaboration of the condition of \textit{Theorem~\ref{theorem_Tsatsomeros}}. The condition of \textit{Theorem~\ref{theorem_Tsatsomeros}} can be modified from an algorithm perspective as:
\begin{corollary} \label{coro_key}
Under the same conditions as \textit{Theorem~\ref{theorem_Tsatsomeros}}, $\forall \alpha \subseteq \mathcal{V}^S$ satisfying $\alpha \subset \mathcal{N}(\alpha)$, if there exists
$i \in \alpha$ such that $\mathcal{N}_i \cap (\mathcal{N}(\alpha)\setminus \alpha) \neq \emptyset$ and $\{\mathcal{N}_i \cap (\mathcal{N}(\alpha)\setminus \alpha)\} \setminus  \{     \mathcal{N}_j \cap (\mathcal{N}(\alpha)\setminus \alpha), \forall j \in \alpha\setminus\{i\}   \}  \neq \emptyset $, then  the graph $\mathcal{G}(T)$ determined from $T=[L,B]$ is topologically controllable. \label{corollary_Tsatsomeros}
\end{corollary}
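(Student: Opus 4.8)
The plan is to show that the hypothesis of Corollary~\ref{coro_key} is merely a node-centred restatement of the edge condition in Theorem~\ref{theorem_Tsatsomeros}, so that the corollary follows at once by invoking the theorem. Write $W := \mathcal{N}(\alpha)\setminus\alpha$ and, for each $i\in\alpha$, set $S_i := \mathcal{N}_i\cap W$, so that $S_i$ collects exactly those vertices outside $\alpha$ that are adjacent to $i$. The first step is the purely notational observation that the set $\{\mathcal{N}_i\cap(\mathcal{N}(\alpha)\setminus\alpha)\}\setminus\{\mathcal{N}_j\cap(\mathcal{N}(\alpha)\setminus\alpha),\ \forall j\in\alpha\setminus\{i\}\}$ appearing in the corollary is nothing but $S_i\setminus\bigcup_{j\in\alpha\setminus\{i\}}S_j$.

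The second and central step is to unpack what nonemptiness of this difference set means. I would argue that an element $w$ lies in $S_i\setminus\bigcup_{j\neq i}S_j$ if and only if $w\in W$, $(i,w)\in\mathcal{E}$, and $(j,w)\notin\mathcal{E}$ for every $j\in\alpha\setminus\{i\}$; in words, $w$ is a vertex of $\mathcal{N}(\alpha)\setminus\alpha$ adjacent to $i$ and to no other vertex of $\alpha$, i.e. adjacent to \emph{exactly one} vertex of $\alpha$, namely $i$. Hence the existence of some $i\in\alpha$ satisfying the corollary's hypothesis is logically equivalent to the existence of some $w\in\mathcal{N}(\alpha)\setminus\alpha$ joined to exactly one element of $\alpha$, which is verbatim the hypothesis of Theorem~\ref{theorem_Tsatsomeros} upon identifying the theorem's outside neighbour with our $w$. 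Both directions of this equivalence must be checked, but each is immediate from the definition of the set difference.

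Finally I would remark that the side condition $\mathcal{N}_i\cap(\mathcal{N}(\alpha)\setminus\alpha)\neq\emptyset$, i.e. $S_i\neq\emptyset$, is automatically implied by nonemptiness of the difference set and so is stated only for emphasis; it carries no independent weight. Since the surviving hypotheses---Assumptions~\ref{assum_diagonal},~\ref{assum_accesible},~\ref{assum_L_matrix} together with the standing requirement $\alpha\subset\mathcal{N}(\alpha)$---are reproduced unchanged, an application of Theorem~\ref{theorem_Tsatsomeros} yields topological controllability of $\mathcal{G}(T)$, completing the proof. I expect no genuine obstacle here: the whole content is the bookkeeping that converts the theorem's existential quantifier over the outside neighbour $j$ into the corollary's existential quantifier over the interior vertex $i$, and the only point demanding care is to confirm that the clause ``exactly one $i$'' is faithfully encoded by the set difference being nonempty.
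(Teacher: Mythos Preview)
Your proposal is correct and matches the paper's own treatment: the paper does not give a formal proof but explains, in the paragraph immediately following the corollary, that the condition is just Theorem~\ref{theorem_Tsatsomeros} rewritten so that one quantifies over $i\in\alpha$ rather than over $j\in\mathcal{N}(\alpha)\setminus\alpha$, and introduces the term \emph{dedicated node} for precisely the element $w$ you isolate. Your careful unpacking of the set difference $S_i\setminus\bigcup_{j\neq i}S_j$ is exactly the bookkeeping the paper leaves implicit.
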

It is relatively easy to check the statement of \textit{Corollary~\ref{corollary_Tsatsomeros}}, since we examine $i \in \alpha$ rather than $j \in \mathcal{N}(\alpha)\setminus \alpha$. It means that if there exists $i\in \alpha$, which is connected to $j \in \mathcal{N}(\alpha)\setminus \alpha$ and $j$ is not connected to other nodes in  $\alpha$, the statement of \textit{Corollary~\ref{corollary_Tsatsomeros}} is satisfied. Such a node, i.e., node $j$, is called a \textit{dedicated node} to $i$. Consequently, for any $i \in \alpha$ (at least one $i$, i.e, $\exists i \in \alpha$), if there exists a dedicated node $j \in \mathcal{N}(\alpha)\setminus \alpha$, then the grouping $\alpha$ is considered to satisfy the statement. We call a graph  topologically controllable if the condition of \textit{Corollary~\ref{corollary_Tsatsomeros}} is satisfied.
\begin{remark}
A sufficient condition for satisfying the condition of \textit{Corollary~\ref{coro_key}} is that  there exists
$i \in \alpha$ such that $\mathcal{N}_i \cap (\mathcal{N}(\alpha)\setminus \alpha) \neq \emptyset$ and $\{\mathcal{N}_i \cap (\mathcal{N}(\alpha)\setminus \alpha)\} \cap  \{     \mathcal{N}_j \cap (\mathcal{N}(\alpha)\setminus \alpha), \forall j \in \alpha\setminus\{i\}   \}  = \emptyset $.
\end{remark}

\begin{lemma}
Let us suppose that any diagonal element of ${L}$ is not identically zero. Then, under the undirected interactions in $\mathcal{V}^S$ and directed interactions between $\mathcal{V}^S$ and $\mathcal{V}^I$, for any choice $\alpha$, it is true that $\alpha \subset \mathcal{N}(\alpha)$.
\end{lemma}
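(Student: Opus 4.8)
The plan is to reduce the statement to a single structural consequence of the hypothesis, namely that a diagonal entry of $L$ which is not identically zero forces a self-loop in the graph $\mathcal{G}$. Recall the construction rule stated earlier: $t_{ii}\neq 0$ for $i\in\{1,\ldots,n\}$ induces a self-loop at node $i$. Applied to $t_{ii}=l_{ii}$, the hypothesis that every diagonal element of $L$ is not identically zero means that each state vertex $i\in\mathcal{V}^S$ carries the edge $(i,i)\in\mathcal{E}$, which by the definition of the neighbor set is exactly the statement $i\in\mathcal{N}_i$. So the first step I would take is simply to record that $i\in\mathcal{N}_i$ for every $i\in\mathcal{V}^S$.

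Given this, the inclusion is almost immediate. For an arbitrary $\alpha\subseteq\mathcal{V}^S$ the group neighbor set is $\mathcal{N}(\alpha)=\bigcup_{i\in\alpha}\mathcal{N}_i$, and since each $i\in\alpha$ lies in its own neighbor set $\mathcal{N}_i\subseteq\mathcal{N}(\alpha)$, the union over $i\in\alpha$ yields $\alpha\subseteq\mathcal{N}(\alpha)$. This is the whole of the set-theoretic content, and it uses nothing beyond the self-loop observation together with the definition of $\mathcal{N}(\alpha)$ as a union of individual neighbor sets.

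The one point requiring care, and the step I expect to be the main obstacle, is the strict inclusion signalled by $\alpha\subset\mathcal{N}(\alpha)$, since the self-loops by themselves deliver only $\alpha\subseteq\mathcal{N}(\alpha)$. To upgrade to a proper inclusion I would invoke accessibility (\textit{Assumption~\ref{assum_accesible}}): fixing any $i\in\alpha$, there is a path $i=v_0,v_1,\ldots,v_L=j$ to some input vertex $j\in\mathcal{V}^I$, and because $\alpha\subseteq\mathcal{V}^S$ while $j\notin\mathcal{V}^S$, this path must eventually leave $\alpha$. At the smallest index $\ell$ with $v_\ell\notin\alpha$ we have $v_{\ell-1}\in\alpha$ and $(v_{\ell-1},v_\ell)\in\mathcal{E}$, so $v_\ell\in\mathcal{N}(\alpha)\setminus\alpha$, exhibiting a vertex of $\mathcal{N}(\alpha)$ outside $\alpha$. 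I anticipate the self-loop step to be purely definitional, so essentially all of the argument's weight falls on phrasing this accessibility-based exhibition of a vertex outside $\alpha$ cleanly.
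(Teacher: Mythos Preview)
Your proposal is correct and takes essentially the same approach as the paper: the non-zero diagonal forces self-loops, giving $\alpha\subseteq\mathcal{N}(\alpha)$, and a connectivity-type hypothesis then makes the inclusion strict. Your accessibility-based first-exit argument for strictness is in fact tighter than the paper's, which loosely invokes connectedness (``each state node has at least two neighboring nodes including itself, if the underlying graph is connected'') and singles out the case $\alpha=\mathcal{V}^S$, whereas your path argument handles every $\alpha$ uniformly.
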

\begin{proof}
For any ${L}'$, ${L}' \in Q({L})$, since the diagonal elements are non-zero, all the state nodes have self-loops. Thus, each state node has at least two neighboring nodes including itself, if the underlying graph is connected. Also, when $\alpha = \mathcal{V}^S$, the neighboring set $\mathcal{N}(\alpha)$ includes all the nodes in $\mathcal{V}^S$ and at least one node in $\mathcal{V}^I$. Thus, $\alpha \subset \mathcal{N}(\alpha)$.
\end{proof}
The above lemma shows that we need to check whether each $\alpha$, for all $\alpha \subseteq \mathcal{V}^S$, would satisfy the condition of \textit{Corollary~\ref{corollary_Tsatsomeros}}. For example, let  $\mathcal{G}^S$ be a path graph, or a tree graph, with an input at terminal node. Fig.~\ref{network_ex_path}(a) shows a path graph. In this case, whatever taking $\alpha$, it is clear that $
\{\mathcal{N}_i \cap (\mathcal{N}(\alpha)\setminus \alpha)\} \cap  \{     \mathcal{N}_j \cap (\mathcal{N}(\alpha)\setminus \alpha) \} = \emptyset $ for any $i, j \in \alpha$. That is, $\exists i \in \alpha$ and $j, \forall j \in \alpha \setminus\{i\}$ such that $\mathcal{N}_i \cap (\mathcal{N}(\alpha) \setminus \alpha) \neq \emptyset$.  Consequently, a path graph is topologically controllable, which is coincident with the result in \cite{YangYu_nature_2011}.
Fig.~\ref{network_ex_path}(b) shows a tree graph. The node $3$ is devided into two paths, i.e., $3 \leftrightarrow 1$ and $3 \leftrightarrow 2$, where the symbol $\leftrightarrow$ is used to denote the connection in the undirected path.
In this tree, if we take $\alpha=\{1, 2\}$, then the nodes $1$ and $2$ share a common neighboring node $3$, and they do not have any dedicated node. Thus, in general, a tree graph with a single input node is not topologically controllable.

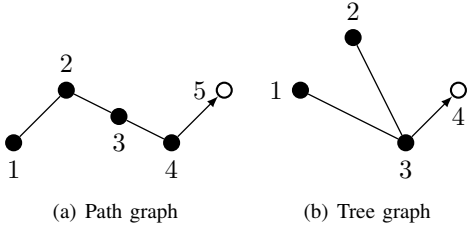
\begin{figure}
\centering
\subfigure[Path graph]{
\begin{tikzpicture}[scale=0.7]
\node[place, black] (node1) at (-2,0) [label=below:$1$] {};
\node[place, black] (node2) at (-1,1) [label=above:$2$] {};
\node[place, black] (node3) at (0,0.5) [label=below:$3$] {};
\node[place, black] (node4) at (1,0) [label=below:$4$] {};
\node[place, circle] (node5) at (2,1) [label=left:$5$] {};

\draw (node1) [line width=0.5pt] -- node [left] {} (node2);
\draw (node2) [line width=0.5pt] -- node [right] {} (node3);
\draw (node3) [line width=0.5pt] -- node [left] {} (node4);
\draw (node4) [-latex, line width=0.5pt] -- node [right] {} (node5);
\end{tikzpicture}
}
\subfigure[Tree graph]{
\begin{tikzpicture}[scale=0.7]
\node[place, black] (node1) at (-2,0) [label=left:$1$] {};
\node[place, black] (node2) at (-1,1) [label=above:$2$] {};
\node[place, black] (node3) at (0,-1) [label=below:$3$] {};
\node[place, circle] (node4) at (1,0) [label=below:$4$] {};

\draw (node1) [line width=0.5pt] -- node [right] {} (node3);
\draw (node2) [line width=0.5pt] -- node [left] {} (node3);
\draw (node3) [-latex, line width=0.5pt] -- node [right] {} (node4);
\end{tikzpicture}
}
\caption{Graphs without cycle (for a simplicity, the self-loops are  omitted in the figure).}
\label{network_ex_path}
\end{figure}

Let  $\mathcal{G}$ be a undirected cycle graph. Then, the condition is also not satisfied, without properly located input nodes. The graphs depicted in Fig.~\ref {network_ex1} include an undirected cycle. For Fig.~\ref{network_ex1}(a), when choosing $\alpha=\{1, 2\}$, the nodes $1$ and $2$ share $3$ as the common node in $\mathcal{N}(\alpha)\setminus \alpha$. So, it does not satisfy the condition. For Fig.~\ref{network_ex1}(b), we have two input nodes. When choosing $\alpha=\{1, 2\}$, the nodes $1$ and $2$ share $3$ as the common node in $\mathcal{N}(\alpha)\setminus \alpha$; but the node $1$ has a dedicated node $5$. In more detail, when choosing $\alpha =\{1,2\}$, we obtain $\mathcal{N}(\alpha) \setminus \alpha =\{3,5\}$. For $i=1$ and $j=2$, we obtain $\mathcal{N}_i =\{2,3,5\}$ and $\mathcal{N}_j=\{1,3\}$. Then, it follows that  $\mathcal{N}_i \cap (\mathcal{N}(\alpha)\setminus \alpha) \neq \emptyset$ and $\{\mathcal{N}_i \cap (\mathcal{N}(\alpha)\setminus \alpha)\} \setminus  \{     \mathcal{N}_j \cap (\mathcal{N}(\alpha)\setminus \alpha), \forall j \in \alpha\setminus\{i\}   \}  =\{5\}  \neq \emptyset$. Likewise, we can see that, for $\alpha=\{1\}$, $\alpha=\{2\}$, $\alpha=\{3\}$, $\alpha=\{2,3\}$, $\alpha=\{1,3\}$, and $\alpha=\{1,2,3\}$, there is at least one dedicated node. Thus, the graph in Fig.~\ref{network_ex1}(b) satisfies the condition. However, when a node is added between the nodes $1$ and $3$, as shown in Fig.~\ref{network_ex3},  the graph does not satisfy the condition, i.e., if we choose $\alpha=\{2, 6\}$, then the nodes $2$ and $6$ share $1$ as a commone node and $3$ also as a common node, i.e., there is no dedicated node for $2$ or for $6$.
It is remarkable that a directed cycle, with the same directions, satisfies the controllability condition since whatever choosing $\alpha$, there is a dedicated node for at least one $i \in \alpha$ (such a directed cycle is called bud in \cite{ ChingTaiLin_TAC_1974}).

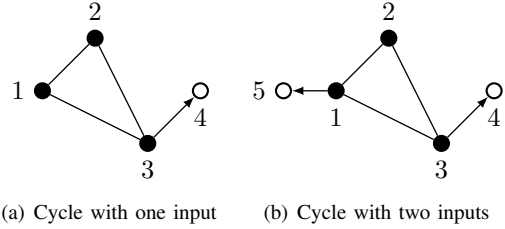
\begin{figure}
\centering
\subfigure[Cycle with one input]{
\begin{tikzpicture}[scale=0.7]
\node[place, black] (node1) at (-2,0) [label=left:$1$] {};
\node[place, black] (node2) at (-1,1) [label=above:$2$] {};
\node[place, black] (node3) at (0,-1) [label=below:$3$] {};
\node[place, circle] (node4) at (1,0) [label=below:$4$] {};


\draw (node1) [line width=0.5pt] -- node [left] {} (node2);
\draw (node1) [line width=0.5pt] -- node [right] {} (node3);
\draw (node2) [line width=0.5pt] -- node [left] {} (node3);
\draw (node3) [-latex, line width=0.5pt] -- node [right] {} (node4);
\end{tikzpicture}
}
\subfigure[Cycle with two inputs]{
\begin{tikzpicture}[scale=0.7]
\node[place, black] (node1) at (-2,0) [label=below:$1$] {};
\node[place, black] (node2) at (-1,1) [label=above:$2$] {};
\node[place, black] (node3) at (0,-1) [label=below:$3$] {};
\node[place, circle] (node4) at (1,0) [label=below:$4$] {};
\node[place, circle] (node5) at (-3,0) [label=left:$5$] {};

\draw (node1) [line width=0.5pt] -- node [left] {} (node2);
\draw (node1) [line width=0.5pt] -- node [right] {} (node3);
\draw (node2) [line width=0.5pt] -- node [left] {} (node3);
\draw (node3) [-latex, line width=0.5pt] -- node [right] {} (node4);
\draw (node1) [-latex, line width=0.5pt] -- node [right] {} (node5);

\end{tikzpicture}
}

\caption{Graphs with three state nodes (cycle) and one input node (Left), or  two input nodes (Right).}
\label{network_ex1}
\end{figure}

\begin{figure}
\centering
\begin{tikzpicture}[scale=0.7]
\node[place, black] (node1) at (-2,0) [label=below:$1$] {};
\node[place, black] (node2) at (-1,1) [label=above:$2$] {};
\node[place, black] (node3) at (0,-1) [label=below:$3$] {};
\node[place, black] (node6) at (-1,-1) [label=below:$6$] {};

\node[place, circle] (node4) at (1,0) [label=below:$4$] {};
\node[place, circle] (node5) at (-3,0) [label=left:$5$] {};

\draw (node1) [line width=0.5pt] -- node [left] {} (node2);
\draw (node1) [line width=0.5pt] -- node [right] {} (node6);
\draw (node3) [line width=0.5pt] -- node [right] {} (node6);
\draw (node2) [line width=0.5pt] -- node [left] {} (node3);
\draw (node3) [-latex, line width=0.5pt] -- node [right] {} (node4);
\draw (node1) [-latex, line width=0.5pt] -- node [right] {} (node5);
\end{tikzpicture}
\caption{A graph with four state nodes (cycle) and two input node.}
\label{network_ex3}
\end{figure}
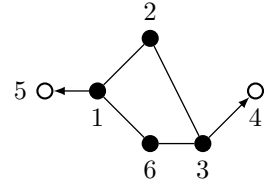

As analyzed in the above examples, it is hard to generate a general rule for the topological controllability. 
It is observed that the graph in Fig.~\ref{network_ex1}(b) is a merged graph of two paths $5 \leftarrow 1 \leftrightarrow 2$ and  $4 \leftarrow 3$ where the symbol $\leftarrow$ is used to denote a connection in directed connection in a path. Also, the graph in Fig.~\ref{network_ex3} is a merged graph of two paths $5 \leftarrow 1 \leftrightarrow 2$ and  $4 \leftarrow 3  \leftrightarrow  6$. The graph in Fig.~\ref{network_ex1}(b) is topologically controllable, while the graph in Fig.~\ref{network_ex3} is not  topologically controllable. If we can generate a graph by merging simple graphs, we may obtain some general rules.

\begin{figure}
\centering
\begin{tikzpicture}[scale=0.7]
\node[place, black] (node1) at (-2,0) [label=below:$1$] {};
\node[place, black] (node2) at (-1,1) [label=above:$2$] {};
\node[place, black] (node3) at (0,0.5) [label=below:$3$] {};
\node[place, black] (node4) at (1,0) [label=below:$4$] {};
\node[place, circle] (node5) at (2,1) [label=left:$5$] {};

\draw (node1) [line width=0.5pt] -- node [left] {} (node2);
\draw (node2) [line width=0.5pt] -- node [right] {} (node3);
\draw (node3) [line width=0.5pt] -- node [left] {} (node4);
\draw (node4) [-latex, line width=0.5pt] -- node [right] {} (node5);

\node[place, black] (node1a) at (-2,-2) [label=below:$6$] {};
\node[place, black] (node2a) at (-1,-1) [label=right:$7$] {};
\node[place, black] (node3a) at (0,-3) [label=below:$8$] {};
\node[place, circle] (node4a) at (1,-2) [label=below:$9$] {};
\node[place, circle] (node5a) at (-3,-2) [label=left:$10$] {};

\draw (node1a) [line width=0.5pt] -- node [left] {} (node2a);
\draw (node1a) [line width=0.5pt] -- node [right] {} (node3a);
\draw (node2a) [line width=0.5pt] -- node [left] {} (node3a);
\draw (node3a) [-latex, line width=0.5pt] -- node [right] {} (node4a);
\draw (node1a) [-latex, line width=0.5pt] -- node [right] {} (node5a);



\draw (node2) [line width=0.5pt, dashed] -- node [left] {} (node2a);

\end{tikzpicture}
\caption{A graph merged by two topologically controllable graphs.}
\label{network_ex_twopath}
\end{figure}
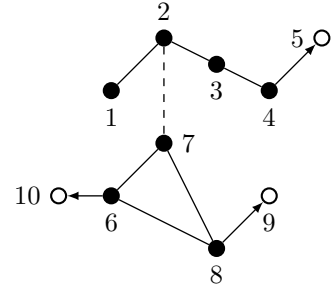

\begin{lemma}\label{lemma_two_quasi}
Let there be two disconnected topologically controllable graphs $\mathcal{G}_1$ and  $\mathcal{G}_2$. If a state node $i$ in $\mathcal{G}_1$ and a state node $j$ in $\mathcal{G}_2$ are connected by an undirected edge, then the merged graph $\mathcal{G} = \mathcal{G}_1 \cup \mathcal{G}_2 $ is topologically controllable.
\end{lemma}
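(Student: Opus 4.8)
The plan is to verify directly that the merged graph $\mathcal{G}$ satisfies the dedicated-node criterion of \textit{Corollary~\ref{coro_key}}: for every $\alpha \subseteq \mathcal{V}^S$ with $\alpha \subset \mathcal{N}(\alpha)$, some node of $\alpha$ owns a neighbor in $\mathcal{N}(\alpha)\setminus\alpha$ that no other node of $\alpha$ touches. Write $\mathcal{V}^S = \mathcal{V}_1^S \cup \mathcal{V}_2^S$ for the state nodes contributed by $\mathcal{G}_1$ and $\mathcal{G}_2$, and split an arbitrary test set as $\alpha = \alpha_1 \cup \alpha_2$ with $\alpha_k = \alpha \cap \mathcal{V}_k^S$. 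Since the single bridge edge $(i,j)$ is the only coupling between the two components, the neighbor map of $\mathcal{G}$ differs from the disjoint union of the neighbor maps of $\mathcal{G}_1,\mathcal{G}_2$ only at the endpoints: $i$ gains $j$ and $j$ gains $i$. First I would record this bookkeeping as $\mathcal{N}_{\mathcal{G}}(\alpha)=\mathcal{N}_{\mathcal{G}_1}(\alpha_1)\cup\mathcal{N}_{\mathcal{G}_2}(\alpha_2)$, augmented by $\{j\}$ when $i\in\alpha_1$ and by $\{i\}$ when $j\in\alpha_2$.

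The easy cases are $\alpha_2=\emptyset$ (and symmetrically $\alpha_1=\emptyset$), where $\alpha\subseteq\mathcal{V}_1^S$. Because each state node carries a self-loop and reaches an input node, the preceding lemma gives $\alpha\subset\mathcal{N}_{\mathcal{G}_1}(\alpha)$, so the topological controllability of $\mathcal{G}_1$ supplies a node $p\in\alpha$ with a dedicated neighbor $q\in\mathcal{N}_{\mathcal{G}_1}(\alpha)\setminus\alpha$. I would then argue that $q$ stays dedicated in $\mathcal{G}$: the only neighbor relation added by the bridge involves $j\in\mathcal{V}_2^S$, which does not lie in $\alpha\subseteq\mathcal{V}_1^S$, so $q$ acquires no new neighbor inside $\alpha$ and the criterion of \textit{Corollary~\ref{coro_key}} is met.

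The substantive case is $\alpha_1\neq\emptyset$ and $\alpha_2\neq\emptyset$. Here I would apply the controllability of $\mathcal{G}_1$ to $\alpha_1$ and of $\mathcal{G}_2$ to $\alpha_2$ (each admissible by the self-loop lemma) to obtain dedicated pairs $(p_1,q_1)$ and $(p_2,q_2)$, with $q_k\in\mathcal{N}_{\mathcal{G}_k}(\alpha_k)\setminus\alpha_k$. Since $q_1\in\mathcal{V}_1$ and $q_2\in\mathcal{V}_2$, neither can become adjacent to a vertex of the opposite component except through the bridge, so $q_1$ loses its dedicated status only if $q_1=i$ while $j\in\alpha_2$, and symmetrically $q_2$ fails only if $q_2=j$ while $i\in\alpha_1$. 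The crux — and the step I expect to be the main obstacle — is showing these two failure modes cannot coincide. But $q_1=i$ forces $i\notin\alpha_1$ (as $q_1\notin\alpha_1$), whereas the failure of $q_2$ requires $i\in\alpha_1$; these are incompatible, so at least one of $q_1,q_2$ survives as a dedicated neighbor for its owner in $\mathcal{G}$.

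Assembling the three cases then shows that every admissible $\alpha$ possesses a dedicated node in $\mathcal{G}$, which is exactly the hypothesis of \textit{Corollary~\ref{coro_key}}; hence $\mathcal{G}=\mathcal{G}_1\cup\mathcal{G}_2$ is topologically controllable. The only place demanding care beyond routine set bookkeeping is the incompatibility argument above; the remaining work amounts to checking that adding a single edge neither enlarges $\alpha$ nor creates spurious shared neighbors, which the endpoint-localized form of the neighbor map makes transparent.
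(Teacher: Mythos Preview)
Your proof is correct and follows the same strategy as the paper: split $\alpha=\alpha_1\cup\alpha_2$, invoke the dedicated-node property of each component, and argue that the single bridge edge cannot spoil both dedicated witnesses. The paper's own argument is terser and essentially asserts that nodes from different components share no common neighbor, whereas your incompatibility step (``$q_1=i$ forces $i\notin\alpha_1$, while the failure of $q_2$ requires $i\in\alpha_1$'') makes explicit exactly the case the paper glosses over; this is a genuine improvement in rigor but not a different method.
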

\begin{proof}
When $\alpha=\{i, j\}$, $i \in  \mathcal{G}_1$ and $j \in  \mathcal{G}_2$,  it is true that  $\{\mathcal{N}_i \cap (\mathcal{N}(\alpha)\setminus \alpha)\} \cap  \{     \mathcal{N}_j \cap (\mathcal{N}(\alpha)\setminus \alpha) \}  = \emptyset      $ since they do not share a common neighbor. Also all the nodes other than $i$ in $ \mathcal{G}_1$ and all the nodes other than $j$ in $ \mathcal{G}_2$ do not have a common neighbor node (for example, as shown in Fig.~\ref{network_ex_twopath}, the nodes $2$ and $7$ do not have a common neighbor node).

Let us choose arbitrary $\alpha \subseteq  \mathcal{G}$, where $\alpha = \alpha_1 \cup \alpha_2$, and $\alpha_1 \subseteq  \mathcal{G}_1$ and $\alpha_2 \subseteq  \mathcal{G}_2$. When we choose $\alpha = \alpha_1$ or $\alpha = \alpha_2$, for any $i \in \alpha$, there is at least one dedicated node $j \in \mathcal{N}(\alpha)\setminus \alpha$ since $ \mathcal{G}_1$
and $\mathcal{G}_2$ are topologically controllable. In the case there exist $i$ and $j$ such that $i, j \in \alpha$, and $i \in \alpha_1$ and $j \in \alpha_2$, there  is still no chance of having  $\{\mathcal{N}_i \cap (\mathcal{N}(\alpha)\setminus \alpha)\} \setminus  \{     \mathcal{N}_j \cap (\mathcal{N}(\alpha)\setminus \alpha), \forall i, j  \}  = \emptyset$. 
Moreover, for all $\alpha_1 \subset \alpha$, and for all $\alpha_2  \subset \alpha$, it is certain that either in $\alpha_1$ or in $\alpha_2$,  there is a node that has a dedicated node in $\mathcal{N}(\alpha_1)\setminus \alpha_1$ or in  $\mathcal{N}(\alpha_2)\setminus \alpha_2$, respectively. Thus, the merged graph $\mathcal{G}$ is topologically controllable.
\end{proof}

With the above lemma, the following corollary is directly obtained.
\begin{corollary} \label{corollary_two_paths}
Let there be two disconnected path graphs $\mathcal{G}_1$ and  $\mathcal{G}_2$. If a state node $i$ in $\mathcal{G}_1$ and a state node $j$ in $\mathcal{G}_2$ are connected by an undirected edge, then the merged graph $\mathcal{G} = \mathcal{G}_1 \cup \mathcal{G}_2 $ is topologically controllable.
\end{corollary}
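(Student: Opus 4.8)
The plan is to recognize that Corollary~\ref{corollary_two_paths} is an immediate specialization of Lemma~\ref{lemma_two_quasi}: a path graph is a particular instance of a topologically controllable graph, so once this observation is justified, the conclusion follows by invoking the lemma with $\mathcal{G}_1$ and $\mathcal{G}_2$ taken to be the two paths and $(i,j)$ the connecting undirected edge. Thus the entire content reduces to confirming that the hypothesis of Lemma~\ref{lemma_two_quasi} is met, namely that each path graph with a terminal input node is topologically controllable in the sense of Corollary~\ref{corollary_Tsatsomeros}.

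To check this, I would verify the condition of Corollary~\ref{corollary_Tsatsomeros} directly on a path $\mathcal{V}^S = \{1,\ldots,k\}$ whose terminal node $k$ carries the input. Every state node has a single self-loop and at most two path-neighbors, so adjacency along the path is a simple linear order. Given any $\alpha \subseteq \mathcal{V}^S$, let $i^\ast$ denote the element of $\alpha$ with the largest index. Its successor $i^\ast+1$ (another state node, or the input node if $i^\ast = k$) lies outside $\alpha$ by maximality, and the only neighbors of $i^\ast+1$ along the path are $i^\ast$ and $i^\ast+2$; since $i^\ast+2 \notin \alpha$, the node $i^\ast+1$ is adjacent to no element of $\alpha$ other than $i^\ast$. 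Hence $i^\ast+1$ is a dedicated node for $i^\ast$, and the membership requirement
\[
\{\mathcal{N}_{i^\ast} \cap (\mathcal{N}(\alpha)\setminus\alpha)\} \setminus \{\mathcal{N}_j \cap (\mathcal{N}(\alpha)\setminus\alpha),\ \forall j \in \alpha\setminus\{i^\ast\}\} \neq \emptyset
\]
holds. As this argument is independent of the choice of $\alpha$, the path is topologically controllable, which also matches the observation already recorded for Fig.~\ref{network_ex_path}(a).

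With both components thus qualifying as topologically controllable graphs, I would then simply apply Lemma~\ref{lemma_two_quasi} to $\mathcal{G} = \mathcal{G}_1 \cup \mathcal{G}_2$ joined by the single edge $(i,j)$, which yields the claim. I expect no genuine obstacle here: the substantive work, namely handling arbitrary $\alpha$ that straddle both components and confirming that no shared neighbor can destroy the dedicated-node property across the bridging edge, was already discharged in the proof of Lemma~\ref{lemma_two_quasi}. The only point warranting explicit care is the verification above that a path graph satisfies the premise of the lemma; everything else is inherited verbatim.
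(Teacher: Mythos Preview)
Your proposal is correct and matches the paper's approach: the paper states the corollary as an immediate consequence of Lemma~\ref{lemma_two_quasi} (``With the above lemma, the following corollary is directly obtained'') without giving a separate proof, relying on the earlier observation that a path graph with a terminal input satisfies the condition of Corollary~\ref{corollary_Tsatsomeros}. Your explicit verification via the maximal-index node $i^\ast$ and its successor $i^\ast+1$ is a clean way to spell out that observation, and the remainder is exactly the invocation of Lemma~\ref{lemma_two_quasi} the paper intends.
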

\begin{remark} \label{remark_contribution}
In structural controllability, a path with an input node (called stem) and directed cycle with the same direction with an input node (called bud) are basic controllable elements \cite{ ChingTaiLin_TAC_1974}. In maximum mathcing process \cite{YangYu_nature_2011}, the key issue is to find paths that are controllable. Similarly to the structural controllability, in topological controllability, the paths are key elements for enlarging the network. However, in our approach, i.e., topological controllability, we are not limited to the paths. The path graph is a special case for controllable graphs.  That is, although the path graphs are important for enlarging a graph (in \textit{Corollary~\ref{corollary_two_paths}} and in \textit{Algorithm~\ref{path_search}} and \textit{Algorithm~\ref{graph_merging}}), as far as the condition of \textit{Corollary~\ref{coro_key}} is satisfied, any graph can be used as a basic element for controllable graph or for enlarging the network. This superiority, in fact, can be used for merging two controllable graphs in a much general way than the cases in structural controllability, as stated in \textit{Corollary~\ref{corollary_main2}}.
\end{remark}

\begin{figure}
\centering
\begin{tikzpicture}[scale=0.7]
\node[place, black] (node1) at (-2,1) [label=above:$i_1$] {};
\node[place, black] (node2) at (-2,0) [label=below:$j_1$] {};
\node[place, black] (node3) at (-3,1.5) [label=below:$$] {};
\node[place, black] (node4) at (-2.5,0.1) [label=below:$$] {};
\node[place, black] (node5) at (-4,1.5) [label=left:$$] {};
\node[place, black] (node6) at (-3.3,-1.0) [label=left:$$] {};

\draw (node1) [line width=0.5pt] -- node [left] {} (node2);
\draw (node1) [line width=0.5pt] -- node [right] {} (node3);
\draw (node2) [line width=0.5pt] -- node [left] {} (node4);
\draw (node3) [line width=0.5pt] -- node [right] {} (node6);
\draw (node4) [line width=0.5pt] -- node [left] {} (node5);
\draw[black,dashed] (-3,0.5) circle (2) ;
\node[place, white] (node7) at (-3.0,-1.6) [label=below:$\mathcal{G}_1$] {};

\node[place, black] (node1a) at (2,0.5) [label=below:$j_2$] {};
\node[place, black] (node2a) at (2,1.8) [label=right:$i_2$] {};
\node[place, black] (node3a) at (3,1.6) [label=below:$$] {};
\node[place, black] (node4a) at (2.5,0.3) [label=below:$$] {};
\node[place, black] (node5a) at (4,1.8) [label=left:$$] {};
\node[place, black] (node6a) at (3.3,-1.2) [label=left:$$] {};

\draw (node1a) [line width=0.5pt] -- node [left] {} (node2a);
\draw (node1a) [line width=0.5pt] -- node [right] {} (node3a);
\draw (node2a) [line width=0.5pt] -- node [left] {} (node4a);
\draw (node3a) [line width=0.5pt] -- node [right] {} (node5a);
\draw (node4a) [line width=0.5pt] -- node [left] {} (node6a);
\draw[black,dashed] (3,0.5) circle (2);
\node[place, white] (node8) at (3.0,-1.6) [label=below:$\mathcal{G}_2$] {};



\draw (node1) [line width=0.5pt, dashed] -- node [left] {} (node2a);
\draw (node2) [line width=0.5pt, dashed] -- node [left] {} (node1a);

\end{tikzpicture}
\caption{A topologically controllable graph merged by two topologically controllable graphs with two edges.}
\label{network_ex_twopath2}
\end{figure}
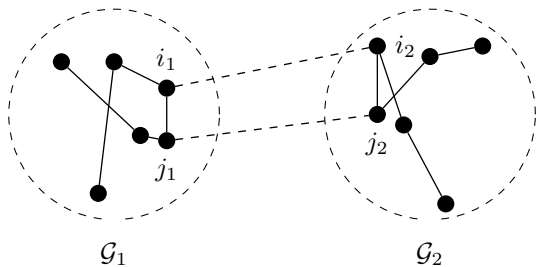

The \textit{Lemma~\ref{lemma_two_quasi}} may provide an intuition for a more general case for merging two graphs. Next, let us consider a case of merging by connecting two edges. 
\begin{lemma} \label{lemma_two_nodes}
Let us consider two disconnected topologically controllable graphs $\mathcal{G}_1$ and  $\mathcal{G}_2$. Let  the state nodes $i_1, j_1$ in $\mathcal{G}_1$ and state nodes $i_2, j_2$ in $\mathcal{G}_2$ be connected by undirected edges as $(i_1, i_2)$ and $(j_1, j_2)$. Then the merged graph $\mathcal{G} = \mathcal{G}_1 \cup \mathcal{G}_2 $ is topologically controllable, if  $\alpha$, $\forall \alpha \subseteq \{i_1, i_2, j_1, j_2\}$, has at least one dedicated node in $\mathcal{N}(\alpha)\setminus \alpha$.
\end{lemma}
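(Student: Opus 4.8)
The plan is to verify the dedicated-node criterion of \textit{Corollary~\ref{coro_key}} for the merged graph $\mathcal{G}$: for every $\alpha\subseteq\mathcal{V}^S(\mathcal{G})$ I must exhibit some $i\in\alpha$ possessing a dedicated node in $\mathcal{N}(\alpha)\setminus\alpha$. First I would decompose any such $\alpha$ as $\alpha=\alpha_1\cup\alpha_2$ with $\alpha_k=\alpha\cap\mathcal{V}^S(\mathcal{G}_k)$, and record the only structural change caused by the merge: the edges $(i_1,i_2)$ and $(j_1,j_2)$ alter the neighborhoods of the four boundary vertices $i_1,j_1,i_2,j_2$ only. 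Concretely, every vertex $v\notin\{i_1,j_1,i_2,j_2\}$ lying in $\mathcal{G}_k$ satisfies $\mathcal{N}_{\mathcal{G}}(v)\cap\alpha=\mathcal{N}_{\mathcal{G}_k}(v)\cap\alpha_k$.

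The backbone of the argument is a transfer principle that follows immediately from this observation: if $v$ is a dedicated node for some $i\in\alpha_k$ inside $\mathcal{G}_k$ (relative to $\alpha_k$) and $v$ is not one of the four boundary vertices, then $v$ is still a dedicated node for $i$ inside $\mathcal{G}$ (relative to $\alpha$). Using this I would dispatch the easy cases first. If $\alpha_2=\emptyset$ (resp.\ $\alpha_1=\emptyset$) then $\alpha$ lives entirely in $\mathcal{G}_1$; since $\mathcal{G}_1$ is topologically controllable there is a dedicated node for $\alpha_1$ in $\mathcal{G}_1$, and even if it is a boundary vertex its cross-partner lies in $\mathcal{G}_2$, hence outside $\alpha$, so it survives in $\mathcal{G}$. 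The same transfer principle also closes the mixed case $\alpha_1,\alpha_2\neq\emptyset$ whenever the controllability of $\mathcal{G}_1$ or of $\mathcal{G}_2$ is witnessed by an \emph{interior} dedicated node, or by a boundary dedicated node whose cross-partner happens not to lie in $\alpha$.

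The only situation that resists the transfer principle is when $\alpha_1$ and $\alpha_2$ are both nonempty and every dedicated node witnessing the controllability of $\mathcal{G}_1$ at $\alpha_1$ and of $\mathcal{G}_2$ at $\alpha_2$ is a boundary vertex whose cross-partner is present in $\alpha$. I would show that this forces the boundary trace $\beta=\alpha\cap\{i_1,i_2,j_1,j_2\}$ into one of a handful of configurations (e.g.\ one endpoint of each cross edge lying in $\alpha$ while the other serves as the external dedicated node), and in each such configuration I would invoke the hypothesis that $\beta$ itself admits a dedicated node $d\in\mathcal{N}(\beta)\setminus\beta$ in $\mathcal{G}$. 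The conclusion then reduces to promoting $d$ from a dedicated node for $\beta$ to a dedicated node for the full $\alpha$.

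That promotion is the step I expect to be the main obstacle. A node dedicated to, say, $i_2$ relative to $\beta$ is only guaranteed to avoid the other boundary vertices of $\beta$; a priori it could be adjacent to an interior vertex of $\alpha_1$ or $\alpha_2$ and thereby fail to be dedicated for all of $\alpha$. The resolution I would pursue is to exploit the rigidity already forced by the ``all-boundary-and-blocked'' regime: there each blocked boundary vertex is a dedicated node of its own $\mathcal{G}_k$ and therefore has an essentially unique neighbor into $\alpha_k$, which sharply constrains the possible spoilers of $d$. Combining this rigidity with the transfer principle applied in reverse, by pushing $d$ back into the graph $\mathcal{G}_k$ that contains it, should show that $d$ cannot be shared with any interior vertex of $\alpha$, completing the verification.
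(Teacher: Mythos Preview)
Your skeleton coincides with the paper's: decompose $\alpha=\alpha_1\cup\alpha_2$, use the controllability of each $\mathcal{G}_k$ together with the fact that only the four boundary vertices acquire new neighbours, and reserve the hypothesis on subsets of $\{i_1,i_2,j_1,j_2\}$ for the residual case. Where you diverge is in how that residual case is delimited. The paper further splits $\alpha=\alpha'\cup\alpha''$ with $\alpha''=\alpha\cap\{i_1,i_2,j_1,j_2\}$ and $\alpha'$ the interior part, and then writes: ``If $\alpha'\neq\emptyset$, it is clear that $\alpha$ has at least one dedicated node.'' No argument is given. The boundary hypothesis is invoked \emph{only} when $\alpha'=\emptyset$, i.e.\ when $\alpha\subseteq\{i_1,i_2,j_1,j_2\}$, so the promotion issue you raise never arises in the paper's use of the hypothesis.

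In other words, the obstacle you isolate---lifting a dedicated node for $\beta=\alpha\cap\{i_1,i_2,j_1,j_2\}$ to one for all of $\alpha$ when $\alpha$ has interior vertices---is precisely the content hidden behind the paper's ``it is clear''. Your instinct that this step needs real work is justified; your transfer principle already handles every situation except the one where both $\mathcal{G}_1$ and $\mathcal{G}_2$ witness controllability of $\alpha_1,\alpha_2$ solely through boundary vertices whose cross-partners lie in $\alpha$. One concrete warning about your proposed resolution: the node $d$ dedicated to $\beta$ that the hypothesis supplies lies in $\mathcal{N}(\beta)\setminus\beta$, which may intersect $\alpha\setminus\beta$; so before worrying about $d$ sharing adjacencies with interior vertices of $\alpha$, you must first rule out $d\in\alpha$ itself. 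The ``rigidity'' you appeal to (each blocked boundary vertex having a unique neighbour into its $\alpha_k$) does not obviously prevent this, so that final paragraph remains a plan rather than an argument---but it is already more than the paper provides at the corresponding step.
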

\begin{proof}
 Let us choose arbitrary $\alpha \subseteq \mathcal{G}$, where $\alpha = \alpha_1 \cup \alpha_2$, and $\alpha_1 \subseteq \mathcal{G}_1$ and $\alpha_2 \subseteq \mathcal{G}_2$. When we choose $\alpha = \alpha_1$ or $\alpha = \alpha_2$, for any $i \in \alpha$, there is at least one dedicated node $j \in \mathcal{N}(\alpha)\setminus \alpha$ in $\mathcal{G}_1$ or in $\mathcal{G}_2$.

In the case there exist $i$ and $j$ such that $i, j \in \alpha\subseteq \mathcal{G} \setminus \{i_1, i_2, j_1, j_2\}$, and $i \in \alpha_1 \subset \alpha$ and $j \in \alpha_2 \subset \alpha$, there  is still no chance of having $\{\mathcal{N}_i \cap (\mathcal{N}(\alpha)\setminus \alpha)\} \cap  \{     \mathcal{N}_j \cap (\mathcal{N}(\alpha)\setminus \alpha), \forall i, j \}  \neq \emptyset$. Moreover, for all $\alpha_1 \subset \alpha$, and for all $\alpha_2  \subset \alpha$, it is certain that either in $\alpha_1$ or in $\alpha_2$,  there is a node that has a dedicated node in $\mathcal{N}(\alpha_1)\setminus \alpha_1$ or in  $\mathcal{N}(\alpha_2)\setminus \alpha_2$, respectively. Next, let 
$\alpha = \alpha' \cup \alpha''$, where $\alpha' \subseteq \mathcal{G} \setminus\{i_1, i_2, j_1, j_2\}$ and $\alpha'' \subseteq \{i_1, i_2, j_1, j_2\}$, and $\alpha'' \neq \emptyset$. If $\alpha' \neq \emptyset$, it is clear that $\alpha$ has at least one dedicated node. Otherwise, if  $\alpha' = \emptyset$, then it is required that whatever we choose $\alpha''$, where $\alpha'' \subseteq \{i_1, i_2, j_1, j_2\}$, it needs to have at least one dedicated node, which completes the proof.
\end{proof}
Fig.~\ref{network_ex_twopath2} depicts a topologically controllable graph produced by merging two topologically controllable graphs with two edges. Whatever $\alpha \subseteq \{i_1, i_2, j_1, j_2\}$, it has at least one dedicated node. However, in the case of Fig.~\ref{network_ex_twopath3}, when we choose $\alpha=\{j_1, i_2\}$, these nodes have $i_1, j_2$ as the common neighbor nodes. Thus, they do not have any dedicated node. Now, with the above lemmas, by induction, we can make the following theorem.
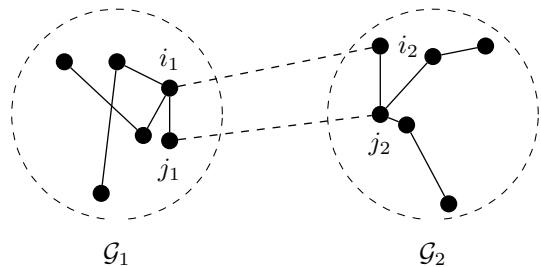
\begin{figure}
\centering
\begin{tikzpicture}[scale=0.7]
\node[place, black] (node1) at (-2,1) [label=above:$i_1$] {};
\node[place, black] (node2) at (-2,0) [label=below:$j_1$] {};
\node[place, black] (node3) at (-3,1.5) [label=below:$$] {};
\node[place, black] (node4) at (-2.5,0.1) [label=below:$$] {};
\node[place, black] (node5) at (-4,1.5) [label=left:$$] {};
\node[place, black] (node6) at (-3.3,-1.0) [label=left:$$] {};

\draw (node1) [line width=0.5pt] -- node [left] {} (node2);
\draw (node1) [line width=0.5pt] -- node [right] {} (node3);
\draw (node1) [line width=0.5pt] -- node [left] {} (node4);
\draw (node3) [line width=0.5pt] -- node [right] {} (node6);
\draw (node4) [line width=0.5pt] -- node [left] {} (node5);
\draw[black,dashed] (-3,0.5) circle (2) ;
\node[place, white] (node7) at (-3.0,-1.6) [label=below:$\mathcal{G}_1$] {};

\node[place, black] (node1a) at (2,0.5) [label=below:$j_2$] {};
\node[place, black] (node2a) at (2,1.8) [label=right:$i_2$] {};
\node[place, black] (node3a) at (3,1.6) [label=below:$$] {};
\node[place, black] (node4a) at (2.5,0.3) [label=below:$$] {};
\node[place, black] (node5a) at (4,1.8) [label=left:$$] {};
\node[place, black] (node6a) at (3.3,-1.2) [label=left:$$] {};

\draw (node1a) [line width=0.5pt] -- node [left] {} (node2a);
\draw (node1a) [line width=0.5pt] -- node [right] {} (node3a);
\draw (node4a) [line width=0.5pt] -- node [left] {} (node1a);
\draw (node3a) [line width=0.5pt] -- node [right] {} (node5a);
\draw (node4a) [line width=0.5pt] -- node [left] {} (node6a);
\draw[black,dashed] (3,0.5) circle (2);
\node[place, white] (node8) at (3.0,-1.6) [label=below:$\mathcal{G}_2$] {};



\draw (node1) [line width=0.5pt, dashed] -- node [left] {} (node2a);
\draw (node2) [line width=0.5pt, dashed] -- node [left] {} (node1a);

\end{tikzpicture}
\caption{Not topologically controllable graph when merged by two topologically controllable graphs with two edges.}
\label{network_ex_twopath3}
\end{figure}
\begin{theorem}
Let two graphs $\mathcal{G}_1$ and $\mathcal{G}_2$ be topologically controllable, respectively. Let $q$ nodes from  $\mathcal{G}_1$ (i.e., let them be denoted as $i_1, i_2, \ldots, i_q$) and another $q$ nodes from  $\mathcal{G}_2$ (i.e., let them be denoted as $j_1, j_2, \ldots, j_q$) be connected one by one. Then, the merged graph $\mathcal{G} = \mathcal{G}_1 \cup \mathcal{G}_2$ is topologically controllable\footnote{Here, achieving the topological controllability is equivalent to satisfying the  condition of \textit{Corollary~\ref{corollary_Tsatsomeros}}. } if and only if  $\alpha$, $\forall \alpha \subseteq \{i_1, \ldots, i_q, j_1, \ldots, j_q\}$, has at least one dedicated node in $\mathcal{N}(\alpha)\setminus \alpha$.
\end{theorem}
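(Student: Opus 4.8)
The plan is to prove the two directions separately, treating necessity as essentially immediate and reserving all effort for sufficiency. For necessity, note that $S := \{i_1,\ldots,i_q,j_1,\ldots,j_q\} \subseteq \mathcal{V}^S$, and by the self-loop lemma preceding \textit{Lemma~\ref{lemma_two_quasi}} every subset satisfies $\alpha \subset \mathcal{N}(\alpha)$; hence the defining condition of \textit{Corollary~\ref{corollary_Tsatsomeros}} for the merged graph, which is quantified over \emph{all} $\alpha \subseteq \mathcal{V}^S$, already asserts a dedicated node for every $\alpha \subseteq S$ in particular. So if $\mathcal{G}$ is topologically controllable then the stated condition holds verbatim, and the ``only if'' direction is free.

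For sufficiency I would argue by induction on $q$, with the cases $q=1$ and $q=2$ supplied by \textit{Lemma~\ref{lemma_two_quasi}} and \textit{Lemma~\ref{lemma_two_nodes}}, and with the inductive step organized as a case analysis over an arbitrary $\alpha \subseteq \mathcal{V}^S$. Writing $\alpha = \alpha_1 \cup \alpha_2$ with $\alpha_1 \subseteq \mathcal{G}_1$ and $\alpha_2 \subseteq \mathcal{G}_2$, the first family of cases is benign. If $\alpha$ lies entirely in one factor, say $\alpha = \alpha_1$, then topological controllability of $\mathcal{G}_1$ yields some $i^\ast \in \alpha_1$ with a dedicated node $j^\ast \in \mathcal{N}_{\mathcal{G}_1}(\alpha_1)\setminus\alpha_1$; this $j^\ast$ survives in $\mathcal{G}$ because its only possible new neighbour is a bridge partner in $\mathcal{G}_2$, which cannot lie in $\alpha=\alpha_1$. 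If $\alpha$ meets both factors but contains no bridge endpoint, then $\mathcal{N}_{\mathcal{G}_1}(\alpha_1)$ and $\mathcal{N}_{\mathcal{G}_2}(\alpha_2)$ occupy disjoint vertex sets, so neither subgraph-supplied dedicated node can be spoiled by cross sharing. Finally, if $\alpha \subseteq S$, the conclusion is exactly the hypothesis.

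The remaining case --- $\alpha$ containing at least one bridge endpoint together with at least one non-connection node --- is the one I expect to be the main obstacle, and it is precisely the step that \textit{Lemma~\ref{lemma_two_nodes}} dispatches with the word ``clear''. The difficulty is a genuine interaction: adding a non-connection node to $\alpha$ can turn a previously private neighbour into a shared one, while at the same moment the dedicated node guaranteed by a factor's controllability can be spoiled exactly when it happens to be a bridge endpoint $i_k$ whose partner $j_k$ already lies in $\alpha$. To close the gap I would assume a hypothetical $\alpha$ with no dedicated node, project onto each factor, and try to read off from the spoiled endpoints a ``bad'' subset $\beta \subseteq S$ that also has no dedicated node, contradicting the hypothesis.

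I am, however, sceptical that this reduction goes through in full generality, and I would stress-test it on small mixed configurations before committing to it. The worry is that a single non-connection node can simultaneously consume the dedicated nodes of the bridge endpoints and destroy its own factor's private neighbour, so that a subset $\alpha \not\subseteq S$ may fail even when every $\beta \subseteq S$ succeeds; if such a configuration exists, the honest conclusion is that the quantifier in the statement must range over a family strictly larger than the bridge endpoints and that the ``if'' direction needs a further hypothesis. Determining whether the mixed case genuinely reduces to subsets of $S$, or instead yields a counterexample, is therefore the crux on which the whole theorem turns.
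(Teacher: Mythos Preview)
Your plan mirrors the paper's proof almost exactly. For the \emph{only if} direction the paper gives the same one-line contrapositive you give: a subset of the bridge endpoints with no dedicated node is itself an $\alpha\subseteq\mathcal{V}^S$ violating \textit{Corollary~\ref{corollary_Tsatsomeros}}. For the \emph{if} direction the paper's entire argument is the single sentence ``can be proved by an induction of the proof of \textit{Lemma~\ref{lemma_two_nodes}}'', i.e.\ precisely the inductive extension of the $q=2$ case analysis that you lay out.

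The step you isolate as the crux --- the mixed case $\alpha=\alpha'\cup\alpha''$ with $\alpha'\not\subseteq S$ and $\alpha''\subseteq S$ both nonempty --- is exactly the step that \textit{Lemma~\ref{lemma_two_nodes}} handles with the phrase ``it is clear that $\alpha$ has at least one dedicated node'', and the theorem's proof adds nothing beyond invoking that lemma inductively. So your scepticism is not a divergence from the paper's approach; it is directed at the one place where the paper itself offers no further justification. In short, your proposal is already as complete as the paper's own proof, and the potential gap you identify is present there as well rather than being a defect of your route.
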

\begin{proof}
The \textit{if} condition can be proved by an induction of the proof of \textit{Lemma~\ref{lemma_two_nodes}}. For the \textit{only if} condition, let there exist $\alpha$, $\alpha \subseteq \{i_1, \ldots, i_q, j_1, \ldots, j_q\}$, that does not have a dedicated node. Then, there exists at least one $\alpha \subset \mathcal{G}$, which does not satisfy the condition of \textit{Corollary~\ref{corollary_Tsatsomeros}}.
\end{proof}
The above theorem can be further generalized as:
\begin{corollary} \label{corollary_main2}
Let two graphs $\mathcal{G}_1$ and $\mathcal{G}_2$ be topologically controllable, respectively. Let $q$ nodes from  $\mathcal{G}_1$ (i.e., let them be denoted as $i_1, i_2, \ldots, i_q$) and  $p$ nodes from  $\mathcal{G}_2$ (i.e., let them be denoted as $j_1, j_2, \ldots, j_p$), where $p \neq q$,  be connected. Then, the merged graph $\mathcal{G} = \mathcal{G}_1 \cup \mathcal{G}_2$ is topologically controllable if and only if  $\alpha$, $\forall \alpha \subseteq \{i_1, \ldots, i_q, j_1, \ldots, j_p\}$, has at least one dedicated node in $\mathcal{N}(\alpha)\setminus \alpha$.
\end{corollary}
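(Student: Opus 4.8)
The plan is to work entirely through the dedicated-node characterization, i.e.\ the condition of \textit{Corollary~\ref{corollary_Tsatsomeros}}: $\mathcal{G}(T)$ is topologically controllable if and only if every $\alpha \subseteq \mathcal{V}^S$ admits a dedicated node in $\mathcal{N}(\alpha)\setminus\alpha$ (recall that $\alpha \subset \mathcal{N}(\alpha)$ holds for \emph{every} $\alpha$ by the self-loop lemma, so that precondition is automatic). Write $\mathcal{I} = \{i_1,\dots,i_q,j_1,\dots,j_p\}$ for the set of connected interface nodes. The single structural fact that drives the whole argument is that the newly added cross edges are incident only to nodes of $\mathcal{I}$: every node outside $\mathcal{I}$ keeps in $\mathcal{G}$ exactly the neighborhood it had in $\mathcal{G}_1$ or $\mathcal{G}_2$. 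I would stress at the outset that neither $p \neq q$ nor the fact that the connection need not be one-to-one ever enters the reasoning except through which subsets of $\mathcal{I}$ possess a dedicated node; this is precisely why the statement is word-for-word the preceding theorem with $\mathcal{I}$ in place of $\{i_1,\dots,i_q,j_1,\dots,j_q\}$, and why that theorem's induction (built on \textit{Lemma~\ref{lemma_two_nodes}}) carries over.

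For the \emph{only if} direction I would argue by contraposition, and this half is immediate. If some $\alpha \subseteq \mathcal{I}$ has no dedicated node, then, since $\mathcal{I} \subseteq \mathcal{V}^S$ and $\alpha \subset \mathcal{N}(\alpha)$, this very $\alpha$ violates the condition of \textit{Corollary~\ref{corollary_Tsatsomeros}}, so $\mathcal{G}$ is not topologically controllable. No use of the controllability of $\mathcal{G}_1$ or $\mathcal{G}_2$ is needed here.

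For the \emph{if} direction the plan is to show that an arbitrary $\alpha \subseteq \mathcal{V}^S(\mathcal{G})$ admits a dedicated node, splitting $\alpha = \alpha' \cup \alpha''$ with $\alpha'' = \alpha \cap \mathcal{I}$ and $\alpha' = \alpha \setminus \mathcal{I}$. If $\alpha' = \emptyset$ then $\alpha \subseteq \mathcal{I}$ and the hypothesis supplies a dedicated node directly. If $\alpha$ lies inside a single component, the topological controllability of that component yields a dedicated node $d$, and since all of $\alpha$ sits in that component while every cross-neighbor of an interface node lies in the other component, $d$ acquires no new $\alpha$-neighbor from the merge and hence survives. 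The remaining case is $\alpha' \neq \emptyset$ with $\alpha$ straddling both components; here I would induct on $|\alpha \cap \mathcal{I}|$. The base case $\alpha \cap \mathcal{I} = \emptyset$ is clean: no node of $\alpha$ is then incident to a cross edge, so the external neighbors supplied inside $\mathcal{G}_1$ and inside $\mathcal{G}_2$ by their respective controllability cannot be shared across components, and a within-component private neighbor stays private in $\mathcal{G}$. For the inductive step I would take the component whose part of $\alpha$ contains a non-interface node, produce a within-component private neighbor $d$ from its controllability, and note that $d$ can fail to remain private only when $d$ is itself an interface node joined by a cross edge to a node of $\alpha$ lying in the other component; removing that spoiling interface node strictly lowers $|\alpha \cap \mathcal{I}|$ without removing the non-interface witness, reducing to a lower case.

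I expect the main obstacle to be exactly this straddling case: controlling the interaction between a within-component private neighbor and the cross edges, i.e.\ excluding the possibility that the added edges simultaneously spoil the dedicated nodes contributed by \emph{both} components. The lever for overcoming it is the structural fact already noted---a cross edge can only spoil an \emph{interface} node, so any spoiler must itself belong to $\alpha \cap \mathcal{I}$, and peeling such spoilers off both decreases the interface content of $\alpha$ and preserves the non-interface witness, which is what makes the induction on $|\alpha \cap \mathcal{I}|$ well-founded and drives every straddling configuration down either to a component-internal subset or to a subset governed by the hypothesis on $\mathcal{I}$.
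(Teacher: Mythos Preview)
Your overall strategy coincides with the paper's. The corollary is stated there without proof, as the immediate $p\neq q$ generalization of the preceding theorem, whose proof consists of one sentence for the \emph{if} direction (``by an induction of the proof of \textit{Lemma~\ref{lemma_two_nodes}}'') together with the same one-line contrapositive you give for \emph{only if}. Your case split for the \emph{if} direction---$\alpha\subseteq\mathcal{I}$ handled by the hypothesis, $\alpha$ contained in a single component handled by that component's controllability, and straddling $\alpha$ handled via the decomposition $\alpha=\alpha'\cup\alpha''$ with $\alpha''=\alpha\cap\mathcal{I}$---is exactly the structure of the proof of \textit{Lemma~\ref{lemma_two_nodes}}, so at the level of approach you and the paper agree.

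There is, however, a gap in your inductive step for the straddling case. Having produced a within-component dedicated node $d$ for $\alpha_1=\alpha\cap\mathcal{G}_1$ and found it spoiled by a cross edge to some $v\in\alpha_2\cap\mathcal{I}$, you remove $v$ and invoke the inductive hypothesis on $\tilde\alpha=\alpha\setminus\{v\}$. But the dedicated node $\tilde d$ obtained for $\tilde\alpha$ need not be dedicated for $\alpha$: nothing prevents $\tilde d$ from being adjacent to the removed node $v$, in which case $\tilde d$ has at least two $\alpha$-neighbors. Lowering $|\alpha\cap\mathcal{I}|$ therefore does not, as written, transport a dedicated node back up to the original $\alpha$, and the induction does not close. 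The paper's own proof of \textit{Lemma~\ref{lemma_two_nodes}} disposes of this same case with the bare assertion ``If $\alpha'\neq\emptyset$, it is clear that $\alpha$ has at least one dedicated node,'' so there is no mechanism in the paper you can borrow to repair the step; the difficulty is left unaddressed there as well.
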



\section{Topologically Controllability of A Graph}\label{sec_graph}
In the previous section, we have developed conditions for the topologically controllability when merging two graphs. So, starting from a nominaly controllable graph (ex, a path graph), we can enlarge the graph gradually to make a bigger controllable graph. However, the conditions given in the previous section are not applicable for checking the topological controllability of a given network. This section provides algorithms for examining the topological controllability of a graph. That is, given a big size graph $\mathcal{G}$, we would like to examine the topological controllability of the graph. It is not computationally feasible to check all $\alpha \subseteq \mathcal{G}$ whether each $\alpha$ would satisfy the condition of \textit{Corollary~\ref{corollary_Tsatsomeros}}. We propose an algorithm to solve this issue. Let there be $n$ state nodes as $x_1, \ldots, x_n$, and $m$ input nodes as $u_1, \ldots, u_m$. Assume that each input node $u_i$ is solely connected to one state node by one-to-one (injective) mapping. Without loss of generality, let $u_i$ be connected to $x_i$. Then the algorithm starts from the state nodes $x_i$. The key idea is to assign a set of some state nodes to one of input nodes such that the assigned state nodes to a specific input node could be connected by a path, without any cycle. Then, the assigned state nodes with an input node can be considered as a topologically controllable graph (i.e., since it is a path). This process is called \textit{decomposition process}. After that, we would like to examine whether two path graphs can be merged as a topologically controllable graph with the connected edges between two path graphs.
For a notational purpose, the following formal definition is necessary.
\begin{definition}
Consider an undirected path $x_i \leftrightarrow x_i^{1,j_1} \leftrightarrow x_i^{2,j_2} \cdots \leftrightarrow x_i^{k,j_k}$, where $x_i, x_i^{1,j_1}, x_i^{2,j_2}, \cdots, x_i^{k,j_k}$ are nodes connected to the root state node $x_i$ in the graph $\mathcal{G}$.
\begin{itemize}
\item It is the length $k$ path, and denoted as $\mathbf{p}[1:k]$.
\item The node $x_i^{p,j_p}$ is called a descendant node of $x_i^{q,j_q}$ when $p > q$; otherwise if $q > p$, it is called a ancestor node. An immediate descendant node is a child node, and an immediate ancestor node is a parent node.
\item The node $x_i$ is called the starting (root) node and the node $x_i^{k,j_k}$ is the terminal node.
\item When a child node $x_{j}$ is added to $\mathbf{p}[1:k]$, the addition is denoted as $\mathbf{p}[1:k] + x_{j}$ and it becomes a length $k+1$ path $\mathbf{p}[1:k+1]$.
\end{itemize}
\end{definition}
When we seek a path, newly added nodes can be considered as child nodes. But, to be a child node, we need to have a rule. Let $i=1$. Then, starting from $x_i$, we search neighbor nodes of $x_i$, i.e., $\mathcal{N}_{x_i}$, which are children nodes of $x_i$. Then from the nodes $x_i^{1,j_1} \in \mathcal{N}_{x_i}$, we also choose neighbor nodes of
$x_i^{1,j_1}$ as  $x_i^{2,j_2} \in \mathcal{N}_{x_i^{1,j_1}}$. If $x_i^{2,j_2}$ is not connected to $x_i$, then it is considered as a child. Similarly, from a child node $x_i^{k,j_k}$, we also search neighbor nodes as $x_i^{k+1,j_{k+1}}$. If $x_i^{k+1,j_{k+1}}$ is not connected to any of  $\{x_1, \ldots, x_m\} \cup \{ x_i^{1,j_1}, \ldots, x_i^{k-1, j_{k-1}} \}$, then it is considered as a child. By this way, we would find a path for node $i$, which is denoted as $\bar{\mathbf{p}}_{i}$. After obtaining the final path $\bar{\mathbf{p}}_{i}$ for $x_i$, we update $i$ as $i \gets i +1$. When $i \geq 2$, we repeat the above process; but $x_i^{k+1,j_{k+1}}$ should not be connected to any of  $\{x_1, \ldots, x_m\} \cup \{ x_i^{1,j_1}, \ldots, x_i^{k-1, j_{k-1}} \}$ and any nodes in the previously searched paths $\bar{\mathbf{p}}_{1}, \ldots, \bar{\mathbf{p}}_{i-1}$.

\begin{definition} (Children nodes) Suppose that we have obtained the final path  $\bar{\mathbf{p}}_{1}$ with the starting node $x_1$, $\ldots$, the final path $\bar{\mathbf{p}}_{i-1}$ with the starting node $x_{i-1}$. Then, for $x_i$, from a node $j$, search all neighbor nodes. The neighbor nodes, which are not connected directly to (i) ancestor nodes of $j$, (ii) $x_i, i=1, \ldots, m$, and (iii)  any nodes in the previously searched paths $\bar{\mathbf{p}}_{1}, \ldots, \bar{\mathbf{p}}_{i-1}$,  are called children nodes (denoted as $\mathcal{C}_i$).
\end{definition}

\begin{definition} (Path update)
Let a length $k$ path $\mathbf{p}[1:k]$ be given with the terminal node $x_k$.  The terminal node $x_k$ has a set of children nodes $\mathcal{C}_{x_k}$. Then, the path $\mathbf{p}[1:k]$ is updated to a set of length $k+1$ paths as $\mathbf{p}[1:k+1] \in \mathcal{P}[1: k+1] \triangleq \{    \mathbf{p}[1:k] + x_j, ~\forall x_j \in      \mathcal{C}_{x_k}   \}$.
Thus, if the cardinality of the set $\mathcal{C}_{x_k}$ is $\beta$, i.e., $\vert \mathcal{C}_{x_k} \vert = \beta$, then the cardinality of the set $ \mathcal{P}[1: k+1]$ is also $\beta$. The set of paths $\mathcal{P}[1: k+1]$ is called updated path set of the path $\mathbf{p}[1:k]$.
\end{definition}
From the above definition, when a path is given as $\mathbf{p}[1:k]$, the updated path set exists if and only if the terminal node of the path $\mathbf{p}[1:k]$ has children nodes. Given a path set $\mathcal{P}[1: k]$, let the paths $\mathbf{p}[1:k] \in \mathcal{P}[1: k]$ be denoted as $\mathbf{p}_1, \ldots, \mathbf{p}_{f}$, where $f = \vert \mathcal{P}[1: k]\vert$. The terminal node of each path $ \mathbf{p}_{j}, ~j \in \{1, \ldots, f\}$ is denoted as $\bar{x}_{j,k}$. With the above definitions, the path search algorithm can be produced as in \textit{Algorithm~\ref{path_search}}.
\begin{algorithm}
\caption{Path search algorithm (decomposition process)}\label{path_search}
\begin{algorithmic}[1]
\Procedure{}{}
\State Obtain $ x_i,~\forall i =1, \ldots, m$ from $u_i$
\State $i=0$
\BState \emph{path}:
 \For{$i=i+1$} 
   \State $k=0$
      \State Select children nodes of $x_i$
     \State Generate the path set $\mathcal{P}[1: 1]$
    \For{$k=k+1$}

      \State Let $\vert \mathcal{P}[1: k] \vert=f$
    \State $j=1$; $ \mathcal{P}[1: k+1]  =\emptyset$; $\mathcal{C}=\emptyset$


      \While{$j \not= f+1$}         

      \State Select $\mathbf{p}_{j} \in \mathcal{P}[1: k]$
     \State Select the terminal node $\bar{x}_{j,k}$ of $\mathbf{p}_{j}$
       \State Select children nodes of $\bar{x}_{j,k}$ and denote the set of children nodes as $\mathcal{C}_{\bar{x}_{j,k}}$
     \State $\mathcal{C} = \mathcal{C} \cup \mathcal{C}_{\bar{x}_{j,k}}$
      \State Make updated path set $\mathcal{P}^j[1: k+1]$ of $\mathbf{p}_{j}$
     \State $\mathcal{P}[1: k+1] =  \mathcal{P}[1: k+1]  \cup \mathcal{P}^j[1: k+1]$
       \State $j \gets j+1$
      \EndWhile\label{euclidendwhile}

     \If { $\mathcal{C} =\emptyset$}
    \State Select the longest path from the paths in $\mathcal{P}[1: k+1]$ and denote it as $\bar{\mathbf{p}}_{i}$
    \State \textbf{goto} \textit{path}
    \EndIf
  \EndFor   \Comment{End for $k$}

  \EndFor   \Comment{End for $i$}

\State \textbf{output} $\bar{\mathbf{p}}_{i}$ for all $i =1, \ldots, m$
\EndProcedure
\end{algorithmic}
\end{algorithm}

The outputs of \textit{Algorithm~\ref{path_search}} are the paths $\bar{\mathbf{p}}_{i}$ for all $i =1, \ldots, m$. Let these path graphs be denoted as $\mathcal{G}_1=(\mathcal{V}_1, \mathcal{E}_1), \ldots, \mathcal{G}_m=(\mathcal{V}_m, \mathcal{E}_m)$. They can be merged by \textit{Corollary~\ref{corollary_main2}}. We first merge $\mathcal{G}_1$ and $\mathcal{G}_2$. For this, we need to find all the edges connecting two graphs $\mathcal{G}_1$ and $\mathcal{G}_2$. If these edges satisfy the condition of  \textit{Corollary~\ref{corollary_main2}}, then two graphs are merged for a single graph which is also topologically controllable. Otherwise, we need to choose maximum edges that connect two graphs under the topologically controllable condition.  When the graphs $\mathcal{G}_1, \mathcal{G}_2, \ldots, \mathcal{G}_k$ are merged as a single graph, it is written as $\mathcal{G}[1:k]$. The \textit{Algorithm~\ref{graph_merging}} outlines the graph merging process.  To the algorithm, we first make a reverse version of  \textit{Corollary~\ref{corollary_main2}} as:
\begin{definition}\label{coro_largest} (Largest edge merging)
Let two graphs $\mathcal{G}_1$ and $\mathcal{G}_2$ be connected by a set of edges $\mathcal{E}' = \{ (i',j'), i' \in \mathcal{G}_1, ~j' \in \mathcal{G}_2\}$. The largest subset of $\mathcal{E}'$, which makes the merged graph topologically controllable,  is  
\begin{align}
\mathcal{E}'' = \arg_{\mathcal{E}^\ast}\max\{ \vert \mathcal{E}^\ast \vert \}
\end{align}
if $\alpha$, $\forall \alpha \subseteq \{  i^\ast, j^\ast: ~(i^\ast,j^\ast) \in \mathcal{E}^\ast \}$, has at least one dedicated node in $\mathcal{N}(\alpha)\setminus \alpha$.
\end{definition}

\begin{algorithm}
\caption{Graph merging algorithm}\label{graph_merging}
\begin{algorithmic}[1]
\Procedure{}{}
\State $\bar{\mathbf{p}}_{i} \to \mathcal{G}_i,~\forall i =1, \ldots, m$
\State $i=0$
\BState \emph{merge}:
 \For{$i=i+1$} 
      \State Given two graphs $\mathcal{G}[1:i]$ and $\mathcal{G}_{i+1}$, find $\forall (i,j)$ where $i \in \mathcal{G}[1:i]$ and $j \in \mathcal{G}_{i+1}$
     \State Choose the largest edge set  $\mathcal{E}''$ that satisfies \text{Corollary~\ref{coro_largest}}
    \State Merge two graphs  $\mathcal{G}[1:i]$ and $\mathcal{G}_{i+1}$ as a single graph $\mathcal{G}[1:i+1]$
       with the edge set $\mathcal{E}''$
     \If { $i < m$}
    \State \textbf{goto} \textit{merge}
    \Else
    \State
     \textbf{goto} \textit{end}
    \EndIf

  \EndFor   \Comment{End for $i$}

\BState \emph{end}:

\State \textbf{output}  $\mathcal{G}[1:m]$
\EndProcedure
\end{algorithmic}
\end{algorithm}

With  \textit{Algorithm~\ref{graph_merging}}, let us suppose that we have obtained $\mathcal{G}[1:m]=\mathcal{G}^\dag(T^\dag) = (\mathcal{V}^\dag, \mathcal{E}^\dag)$. Then, the graph $\mathcal{G}^\dag$ is topologically controllable, and the nodes $\overline{\mathcal{V}}= \mathcal{V} \setminus \mathcal{V}^\dag$ are not ensured to be topologically controllable. On the other hand, if $\mathcal{V}_1 \cup \mathcal{V}_2 \cup \cdots \cup \mathcal{V}_m = \mathcal{V}$, then the edges  $\overline{\mathcal{E}}= \mathcal{E} \setminus \mathcal{E}^\dag$ are harmful for a  topological controllability of the nominal network $\mathcal{G}$, and need to be removed for topological controllability. In this sense, we can claim the following conclusion:
\begin{theorem} \label{theorem_final_main}
If $ \mathcal{V} = \mathcal{V}^\dag$ and $ \mathcal{E} = \mathcal{E}^\dag$, then the nominal graph $\mathcal{G}$ is topologically controllable.
\end{theorem}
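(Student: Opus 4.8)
The plan is to show that the graph $\mathcal{G}^\dag = \mathcal{G}[1:m]$ returned by \textit{Algorithm~\ref{graph_merging}} is topologically controllable by construction, and then to observe that the hypotheses $\mathcal{V}=\mathcal{V}^\dag$ and $\mathcal{E}=\mathcal{E}^\dag$ identify $\mathcal{G}$ with $\mathcal{G}^\dag$ as labeled graphs, so that the controllability of $\mathcal{G}^\dag$ transfers verbatim to $\mathcal{G}$. The base case of the construction is supplied by \textit{Algorithm~\ref{path_search}}, which outputs for each input a path $\bar{\mathbf{p}}_i$ rooted at the state node $x_i$ to which $u_i$ is attached; as shown in the path-graph discussion following \myfig\ref{network_ex_path}(a), such a path satisfies the condition of \textit{Corollary~\ref{corollary_Tsatsomeros}} and is therefore topologically controllable, so every $\mathcal{G}_i$ is topologically controllable.

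Next I would induct over the merge steps of \textit{Algorithm~\ref{graph_merging}}. Assume $\mathcal{G}[1:k]$ is topologically controllable (the inductive hypothesis). At step $k$ the algorithm joins $\mathcal{G}[1:k]$ to the path $\mathcal{G}_{k+1}$ through the edge set $\mathcal{E}''$ chosen by \textit{Definition~\ref{coro_largest}}, namely the largest admissible subset of connecting edges for which every $\alpha$ drawn from the incident endpoints possesses a dedicated node in $\mathcal{N}(\alpha)\setminus\alpha$. This is exactly the hypothesis of \textit{Corollary~\ref{corollary_main2}}, applied with $\mathcal{G}_1=\mathcal{G}[1:k]$ and $\mathcal{G}_2=\mathcal{G}_{k+1}$, both topologically controllable; hence $\mathcal{G}[1:k+1]$ is topologically controllable. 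Iterating up to $k=m-1$ establishes that $\mathcal{G}^\dag = \mathcal{G}[1:m]$ is topologically controllable.

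Finally I would close the argument with the hypotheses. Since merging is a union of vertex and edge sets, $\mathcal{V}^\dag=\bigcup_i \mathcal{V}_i$, and $\mathcal{E}^\dag$ is composed of the path edges of the $\bar{\mathbf{p}}_i$ together with the accepted merge edges $\mathcal{E}''$ of each step, while the discarded edges form $\overline{\mathcal{E}}=\mathcal{E}\setminus\mathcal{E}^\dag$. Thus $\mathcal{E}=\mathcal{E}^\dag$ records that no edge was ever rejected as harmful, and $\mathcal{V}=\mathcal{V}^\dag$ records that every node was covered; together they give $\mathcal{G}=(\mathcal{V},\mathcal{E})=(\mathcal{V}^\dag,\mathcal{E}^\dag)=\mathcal{G}^\dag$, so $\mathcal{G}$ inherits the topological controllability of $\mathcal{G}^\dag$.

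The part I expect to demand the most care is the inductive step rather than any computation: one must confirm that \textit{Corollary~\ref{corollary_main2}}, phrased for two graphs joined by a single batch of connecting edges, still applies when $\mathcal{G}_1$ is itself a composite produced by earlier merges, and that checking the dedicated-node condition only on the endpoints of $\mathcal{E}''$ suffices for the global requirement on all $\alpha\subseteq\mathcal{G}[1:k+1]$. The proofs of \textit{Lemma~\ref{lemma_two_nodes}} and \textit{Corollary~\ref{corollary_main2}} already reduce the global check to those endpoints, so the induction closes; the remaining subtlety is verifying that the decomposition genuinely partitions the state nodes and uses disjoint path edges, which is what makes the reconstruction $\mathcal{G}=\mathcal{G}^\dag$ exact.
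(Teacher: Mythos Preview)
Your argument is correct and follows essentially the same route as the paper: the paper does not give a formal proof of \textit{Theorem~\ref{theorem_final_main}} but simply observes, in the paragraph preceding it, that $\mathcal{G}^\dag=\mathcal{G}[1:m]$ is topologically controllable by construction of \textit{Algorithm~\ref{graph_merging}} and that the hypotheses $\mathcal{V}=\mathcal{V}^\dag$, $\mathcal{E}=\mathcal{E}^\dag$ identify $\mathcal{G}$ with $\mathcal{G}^\dag$. Your induction over the merge steps via \textit{Corollary~\ref{corollary_main2}}, with the path graphs as the base case, is exactly the formalization of that construction-based claim.
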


The overall procedure to examine the topological controllability of a network can be summarized as follows. Given a network, it is required to transform the network to a graph $\mathcal{G}(T) =(\mathcal{V}, \mathcal{E})$. Then, by \textit{Algorithm~\ref{path_search}}, for all inputs $u_i$, we search for the paths $\bar{\mathbf{p}}_{i}$.  Then, by  \textit{Algorithm~\ref{graph_merging}}, by way of finding the largest edge set $\mathcal{E}''$, we gradually merge the paths to have a topological controllable graph $\mathcal{G}[1:m]=\mathcal{G}^\dag$.

All the results of this section and previous section were developed under the \textit{Assumption~\ref{assum_L_matrix}}. Thus, it may be necessary to check whether  \textit{Assumption~\ref{assum_L_matrix}} is satisfied or not. For this, we may need to check whether the graph $\mathcal{G}^\dag$ is $L$-matrix or not. For this, from $\mathcal{G}^\dag$, we obtain $T^\dag$ as the inverse of ${\mathcal{G}^\dag}(T^\dag)$. If $T^\dag$  is a $L$-matrix, then the network corresponding to the graph $\mathcal{G}^\dag$ is concluded as topologically controllable. The $L$-matrixness of a matrix $T$ can be examined using some existing results; for example, refer to \cite{Brualdi1995sign}.

\section{Examples} \label{sec_exam}
Let us consider the network depicted in Fig.~\ref{network_graph_concept}(a). To use  \textit{Algorithm~\ref{path_search}} for the path search, the labels of nodes should be changed as $x_6 = u_1$, $x_7 = u_2$, and $x_8= u_3$. Then, the state nodes $x_1, \ldots, x_5$ are searched with new-labels, as per the \textit{Algorithm~\ref{path_search}}. By the algorithm, we can obtain three paths $\mathcal{G}_1 = \bar{\mathbf{p}}_{1} :  u_1 (= x_6)  \rightarrow  x_3 \rightarrow x_2$, $\mathcal{G}_2 =  \bar{\mathbf{p}}_{2} :   u_2 (= x_7)  \rightarrow x_4$, $\mathcal{G}_3 =  \bar{\mathbf{p}}_{3} :  u_3 (= x_8)  \rightarrow  x_5 \rightarrow x_1$. Now, it is required to apply   \textit{Algorithm~\ref{graph_merging}} for merging these path graphs. There are two edges connecting $\mathcal{G}_1$ and $\mathcal{G}_2$, i.e., $(3,4)$ and $(2,4)$. These edges are in $\mathcal{E}''$. Thus, the merged graph $\mathcal{G}[1:2]$ is a topological controllable graph. Then, there are two graphs $\mathcal{G}[1:2]$ and $\mathcal{G}_3$, which needs to be merged. There are four edges between them, i.e., $(1,2), (1,3), (2,5), (5,4)$. It is also easy to check that their edges are also in $\mathcal{E}''$. Consequently, we can conclude that the original network (depicted in Fig.~\ref{network_graph_concept}(a)) or its corresponding graph (depicted in Fig.~\ref{network_graph_concept}(b)) is topologically controllable. This conclusion is confirmed from a number of numerical random tests,  with random values in the elements of $L$, by checking the rank of the following controllability Gramian matrix:
\begin{align}
\mathbf{C}_{L} = [ B, LB, L^2 B, L^3 B, L^4 B] \nonumber
\end{align}
For all random tests, and for any specific cases (with all edge values being $1$ or $-1$), the rank was $5$.

Next, let us consider another network depicted in Fig.~\ref{network_graph_concept_non}. It is a network, which is same to Fig.~\ref{network_graph_concept}(a), but with one more edge $(1,4)$. As same to the case of Fig.~\ref{network_graph_concept}(a), we can have three path graphs $\mathcal{G}_1 = \bar{\mathbf{p}}_{1} :  u_1 (= x_6)  \leftarrow  x_3 \leftrightarrow x_2$, $\mathcal{G}_2 =  \bar{\mathbf{p}}_{2} :   u_2 (= x_7)  \leftarrow x_4$, $\mathcal{G}_3 =  \bar{\mathbf{p}}_{3} :  u_3 (= x_8)  \leftarrow  x_5 \leftrightarrow x_1$. When merging graphs $\mathcal{G}[1:2]$ and $\mathcal{G}_3$, unlikely  Fig.~\ref{network_graph_concept}(a), there are five edges $(1,2), (1,3), (1,4), (2,5), (5,4)$. Then, when we choose $\alpha =\{1,2\}$, the nodes $1$ and $2$ share nodes $3, 4$, and $5$ as the common neighboring nodes; hence in this case, nodes $1$ and $2$ do not have any dedicated node. Thus, the condition for the topological controllability is not satisfied (i.e., as per \textit{Theorem~\ref{theorem_final_main}}, we have $\mathcal{V} = \mathcal{V}^\dag$; but $\mathcal{E} \setminus \mathcal{E}^\dag \neq \emptyset$).
 From a number of random tests, with random values in the elements of $L$, the rank of $\mathbf{C}_{L}$ was still $5$. But, with some specific values, for examples, edge values being $1$ or $-1$, or with integer values, the rank of $\mathbf{C}_{L}$ was not full. For example, Fig.~\ref{example1_random} shows the results of random tests. The left plot shows the rank of $\mathbf{C}_{L}$, when $(1,4) \neq 0$. But, when $(1,4)$ switches to zero, the rank becomes $5$, as shown in the right plots. Consequently, we can see that the edge $(1,4)$ is harmful for the topological controllability.
\begin{figure}
\centering
\begin{tikzpicture}[scale=0.7]
\node[place, black] (node1) at (-1,1) [label=above:$1$] {};
\node[place, black] (node2) at (-2,0) [label=below:$2$] {};
\node[place, black] (node3) at (-1,-1) [label=below:$3$] {};
\node[place, black] (node4) at (0,-1) [label=below:$4$] {};
\node[place, black] (node5) at (0.2,0.8) [label=right:$5$] {};

\node[place, circle] (node6) at (-2.0,-1.2) [label=below:$6$] {};
\node[place, circle] (node7) at (1,-1.5) [label=right:$7$] {};
\node[place, circle] (node8) at (1.0,1.5) [label=right:$8$] {};

\draw (node1) [line width=0.5pt] -- node [left] {} (node2);
\draw (node1) [line width=0.5pt] -- node [right] {} (node3);
\draw (node1) [line width=0.5pt] -- node [right] {} (node4);
\draw (node1) [line width=0.5pt] -- node [right] {} (node5);
\draw (node2) [line width=0.5pt] -- node [right] {} (node3);
\draw (node2) [line width=0.5pt] -- node [left] {} (node4);
\draw (node2) [line width=0.5pt] -- node [left] {} (node5);
\draw (node3) [line width=0.5pt] -- node [left] {} (node4);
\draw (node4) [line width=0.5pt] -- node [left] {} (node5);

\draw (node6) [-latex, line width=0.5pt] -- node [right] {} (node3);
\draw (node7) [-latex, line width=0.5pt] -- node [right] {} (node4);
\draw (node8) [-latex, line width=0.5pt] -- node [right] {} (node5);
\end{tikzpicture}
\caption{A possibly non-topological controllable network.}
\label{network_graph_concept_non}
\end{figure}
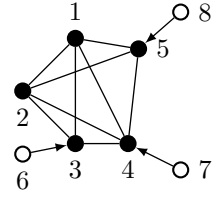
To check the controllability under the same signs, let the signs of edges be given as $\text{sign}(a_{12})= +$,
$\text{sign}(a_{13})= -$,  $\text{sign}(a_{14})= -$, $\text{sign}(a_{15})= -$, $\text{sign}(a_{23})= -$,  $\text{sign}(a_{24})= -$, $\text{sign}(a_{25})= -$, $\text{sign}(a_{34})= -$, and $\text{sign}(a_{45})= +$. Then, we randomly assign absolute magnitude of edges in integer values $1, 2, 3, 4, 5$.  Fig.~\ref{example1_sign} shows the random test results. When  $(1,4) \neq 0$, the rank is reduced; but when it switches to zero, the rank becomes full. But, surprisingly, when the sign of $a_{14}$ changes to $\text{sign}(a_{14})= +$, or the sign of $a_{13}$ changes to $\text{sign}(a_{13})= +$, the rank becomes full again.  Fig.~\ref{example1_sign2} shows the rank tests with different signs. Thus, from numerical random tests, we can see that the topological controllability is dependent on the sign of edges.
\begin{figure*}[h]
\centering
\includegraphics[width=14cm,height=5.2cm]{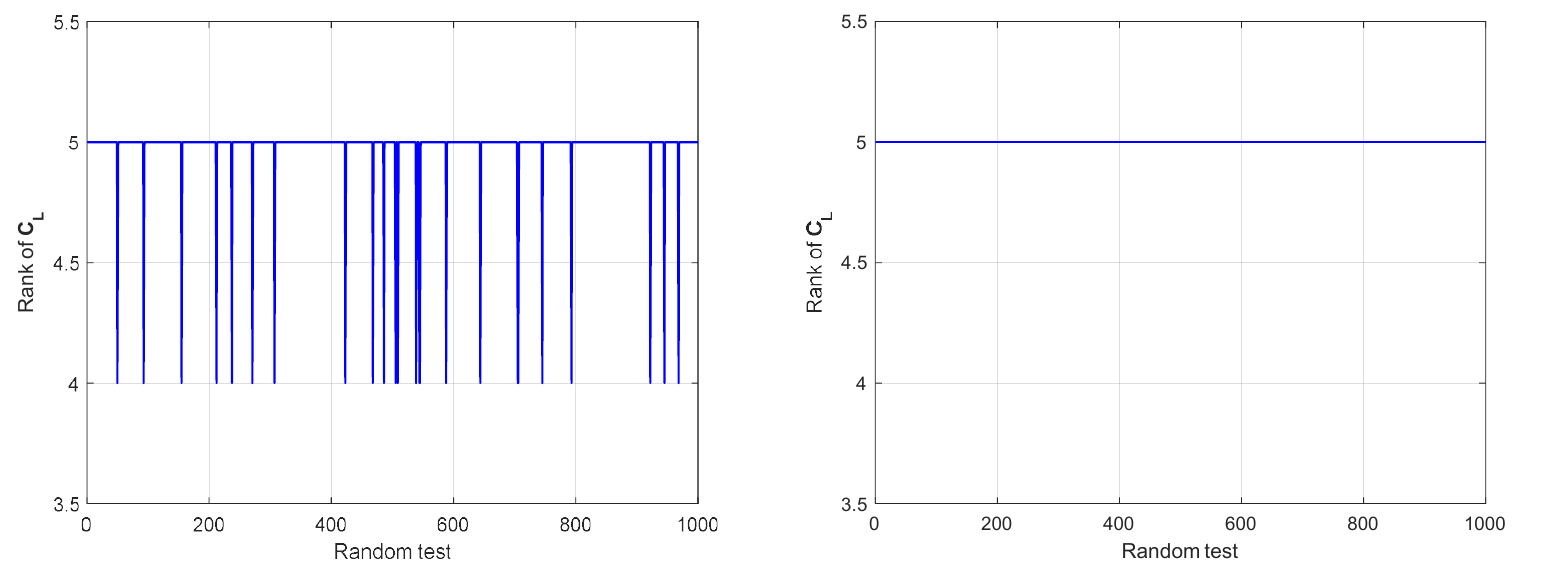}
\caption{Random tests for chekcing the rank of $\mathbf{C}_L$. Left: $(1,4) \neq 0$. Right: $(1,4)=0$} \label{example1_random}
\end{figure*}
\begin{figure*}[h]
\centering
\includegraphics[width=14cm,height=5.2cm]{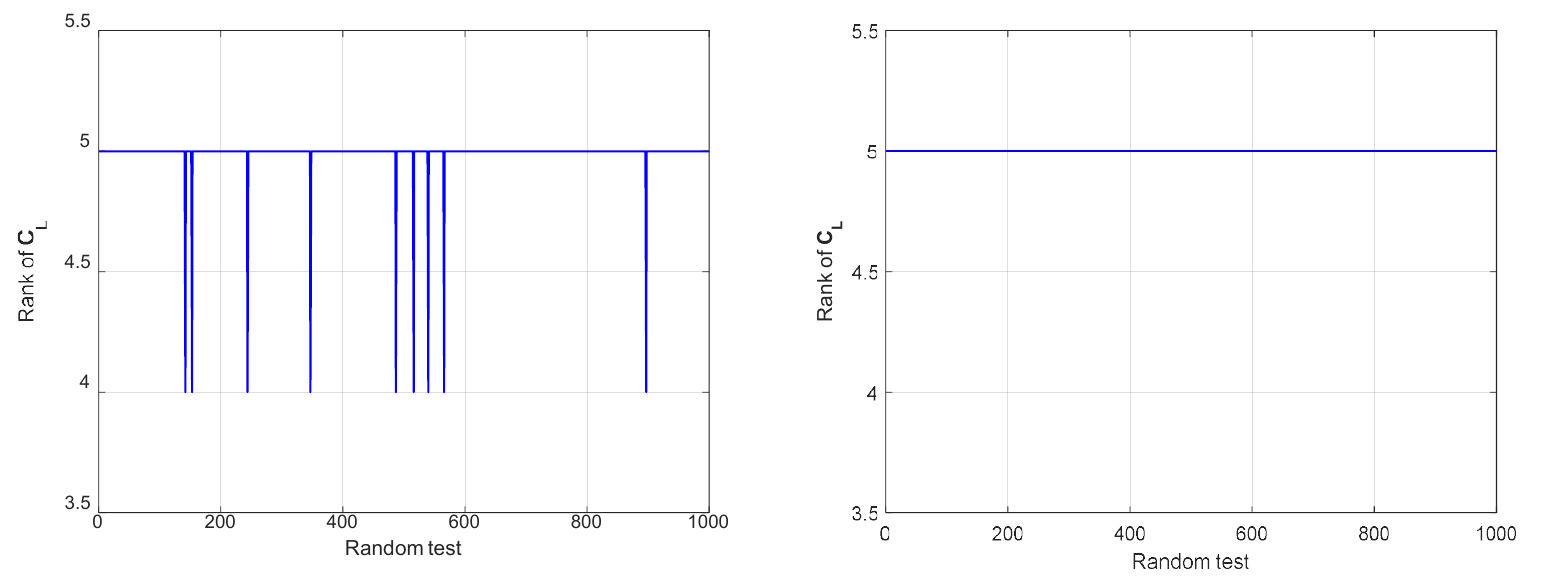}
\caption{Random tests for chekcing the rank of $\mathbf{C}_L$ under the same signs. Left: $(1,4) \neq 0$. Right: $(1,4)=0$} \label{example1_sign}
\end{figure*}
\begin{figure*}[h]
\centering
\includegraphics[width=14cm,height=5.2cm]{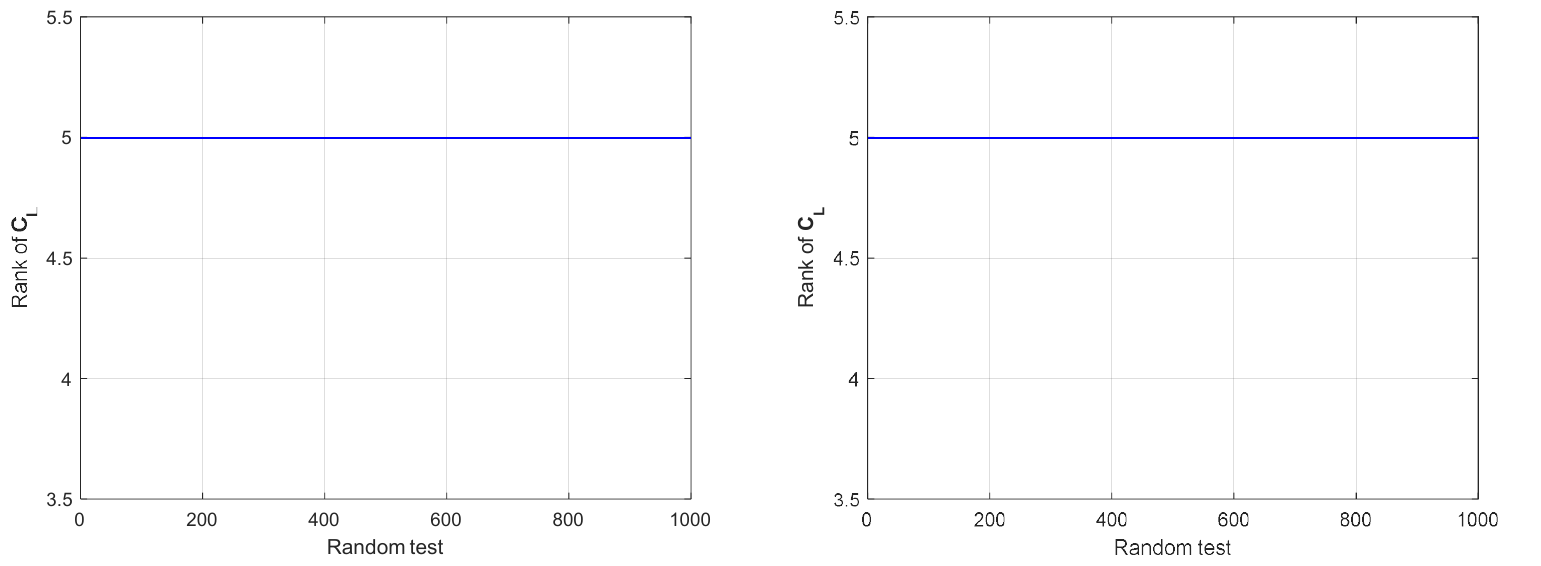}
\caption{Random tests for chekcing the rank of $\mathbf{C}_L$ under the same signs. Left: $(1,4) \neq 0$; but  $\text{sign}(a_{14})= +$. Right: $(1,4)\neq 0$,  $\text{sign}(a_{14})= -$ and $\text{sign}(a_{13})= +$.} \label{example1_sign2}
\end{figure*}


\section{Conclusion} \label{sec_conc}
This paper has presented conditions to establish the controllability of an undirected networks of diffusively-coupled agents using only the knowledge of the signs of edges, motivated by and based on results in \cite{Tsatsomeros_siam_1998}. Because the resulting conditions are computationally-hard, we developed a merging process for creating an enlarged network starting from a basic controllable graph. The merging process was then used to develop a decomposition process for evaluating the topological controllability of a given network. Through numerical simulations, we could verify the effectiveness of the proposed algorithms. However, there are still many open problems. For example, if we could find basic path graphs in the decomposition process in an optimal way (i.e., minimizing the number of nodes that are not included in the final paths), then we may be able to find a more bigger subgraph induced by the controllability\footnote{It appears that the process for finding the paths in an optimal way looks similar to the maximum matching process proposed in \cite{YangYu_nature_2011}. However, it seems that the merging and decomposition algorithms proposed here are more efficient and general. Also, we do not claim that the path graphs are only basic controllable subgraphs, although our algorithms were developed from path graphs. Thus, in our future efforts, we would like to develop new decomposition and merging algorithms from more general basic controllable subgraphs.}. In this paper, we have focused on undirected diffusive-coupled networks, but we believe we can easily extend the results to the directed case. These extensions will be studied in our future research.

\section*{Acknowledgment}
The work of this paper has been supported by the National Research Foundation (NRF) of Korea under the grant NRF-2017R1A2B3007034. 


\bibliographystyle{IEEEtran}
\bibliography{opinion_dynamics}

\end{document}